\providecommand{\mrm}[1]{\ensuremath{\mathrm{#1}}}
\providecommand{\mcal}[1]{\ensuremath{\mathcal{#1}}} 
\providecommand{\mbb}[1]{\ensuremath{\mathbb{#1}}}
\DeclareMathOperator{\lc}{lc}
\newcommand{\opBP}{\mcal{BP}\cdot}
\newcommand{\ccc}{\ensuremath{\mrm{cc}}}
\newcommand{\sPcc}{\#\mcal{P}^{\ccc}}
\newcommand{\PPcc}{\mcal{PP}^{\ccc}}
\newcommand{\PHcc}{\mcal{PH}^{\ccc}}
\newcommand{\PSPACEcc}{\mcal{PSP\!ACE}^{\ccc}}
\newcommand{\PP}{\mrm{PP}}
\newcommand{\disc}{\mrm{disc}}
\newcommand{\mc}{\mrm{mc}}
\newcommand{\BP}{\mrm{BP}}
\newcommand{\acc}{\mrm{acc}}
\newcommand{\rej}{\mrm{rej}}
\newcommand{\gap}{\mrm{gap}}
\newcommand{\ABie}{i.e.,\ } 
\newcommand{\ABeg}{e.g.,\ }
\newcommand{\ABvs}{vs.\xspace}
\newcommand{\ABp}{p.\xspace}
\newcommand{\ABWlog}{W.l.o.g.,\ }
\newcommand{\ABwlog}{w.l.o.g.\ }
\newcommand{\B}{\mbb{B}}
\newcommand{\X}{\mcal{X}}
\newcommand{\Y}{\mcal{Y}}
\newcommand{\XtY}{\X\times\Y}
\newcommand{\Mid}{\,\middle|\,}
\newcommand{\das}{\vcentcolon\nolinebreak\mkern-1.2mu\nolinebreak=}
\newcommand{\polylog}{\mrm{polylog}}
\title{A note on a problem in communication complexity}
\author{Henning Wunderlich\footnote{Capgemini, Carl-Wery-Str. 42, 81739 München, Email: \texttt{henning.wunderlich@capgemini.com}}}
\begin{document}
\maketitle
\begin{abstract}
\noindent
In this note, we prove a version of Tarui's Theorem in communication complexity, namely $\PHcc \subseteq \opBP\PPcc$.
Consequently, every measure for $\PPcc$ leads to a measure for $\PHcc$, subsuming a result of Linial
and Shraibman that problems with high mc-rigidity lie outside the polynomial hierarchy.
By slightly changing the definition of mc-rigidity (arbitrary instead of uniform distribution), it is then evident
that the class $\mcal{M}^{\ccc}$ of problems with low mc-rigidity equals $\opBP\PPcc$. As
$\opBP\PPcc \subseteq \PSPACEcc$, this rules out the possibility, that had been left open, that even polynomial space
is contained in $\mcal{M}^{\ccc}$.
\end{abstract}

\section{Introduction}
This note is a contribution to the field of communication complexity.
We refer the reader to \cite{books/KushilevitzNisan} for an excellent introduction.
We are concerned with ideas circling around the $\PHcc$-\ABvs-$\PSPACEcc$ problem,
a long-standing open problem in structural communication complexity, first posed in \cite{DBLP:conf/focs/BabaiFS86}.

For each computation model there exists a corresponding structural complexity theory.
The study of structural complexity theory began by considering circuit classes and the Turing machine model,
see \ABeg \cite{books/ComplexityTheoryCompanion,books/KIKCC,books/SCI,books/SCII}
for good introductions. A prominent result in this area, influenced by \cite{DBLP:journals/siamcomp/Toda91},
is Tarui's Theorem, see \cite{DBLP:conf/stacs/Tarui91},
relating the polynomial hierarchy to probabilistic computation modes.

Starting with \cite{DBLP:conf/focs/BabaiFS86}, communication complexity classes
were defined and their relationships were studied. In contrast to the Turing-machine model, much is known
about the relationships between the set of standard classes for the communication model (Yao's model, \citealp{DBLP:conf/stoc/Yao79}). 
Unfortunately, the difficulties start with the second level of
the polynomial hierarchy and, as said before, it is a long-standing open problem,
whether or not the polynomial hierarchy, $\PHcc$, and polynomial space, $\PSPACEcc$, differ. 

Several strategies have been proposed to tackle this problem.
Razborov's strategy is based on the rigidity of finite-field rank
(see \citealp{RazborovRigidMatrices,DBLP:journals/eccc/Wunderlich10}).
Lokam uses ideas of Tarui to reduce the problem to rigidity problems, where those ridigities are defined via rank over the
field of real numbers (\citealp{DBLP:journals/jcss/Lokam01}).
\citealp{DBLP:journals/cpc/LinialS09} establish a connection to learning theory.
They define the notion of mc-rigidity and show that high mc-ridigity yields problems outside the polynomial hierarchy.
Furthermore, they conjecture that families of Hadamard matrices have high mc-rigidity. If true, this would yield the desired
separation.

In this note, we prove a version of Tarui's Theorem in communication complexity, namely $\PHcc \subseteq \opBP\PPcc$.
Consequently, every measure for $\PPcc$ leads to a measure for $\PHcc$, subsuming one of the results of Linial
and Shraibman mentioned above.
We slightly change the definition of mc-rigidity. In our terminology, we apply the BP-operator on margin complexity.
Now, an arbitrary probability distribution is allowed in the definition of margin rigidity.
In contrast, in the original definition, the uniform distribution was used.
(Hence, it is possible to consider unbalanced communication matrices as candidates for high mc-rigidity, too.)
It is then evident that the class $\mcal{M}^{\ccc}$ of problems with low mc-rigidity equals $\opBP\PPcc$. As
$\opBP\PPcc \subseteq \PSPACEcc$, this rules out the possibility, that had been left open by prior work, that even polynomial space
is contained in $\mcal{M}^{\ccc}$. In other words, the possibility that mc-rigidity defines
a communication complexity class, which is too big to be useful for the desired separation result between
$\PHcc$ and $\PSPACEcc$, is ruled out.

\section{Structural complexity theory}
\subsection{On $\PPcc$}
In the setting of communication complexity, formal languages are defined a bit differently than in the Turing-machine
world. Let $\B \das \{0,1\}$ denote the Boolean alphabet.
The set of pairs of strings of equal length is denoted by
$\B^{\ast\ast} \das \{ (x,y) \mid x,y \in \B^{\ast}, |x| = |y|\}$.
A \emph{(formal) language} $L$ is a subset of $\B^{\ast\ast}$, its \emph{$n$-bit section}
$L_{n}$ is the set of all pairs $(x,y) \in L$ of $n$-bit words $x,y$.
A \emph{communication complexity class} is a set of languages.

Communication complexity classes were first defined in \cite{DBLP:conf/focs/BabaiFS86}, in particular, the analog of probabilistic polynomial
time, $\PPcc$. In this subsection, we recall basic definitions and properties related to this class.

We define a \emph{guess protocol} $\Pi$ (over domain $\XtY$ with range $\B$) as a finite sequence $\Pi \das (\Pi_{1},\ldots,\Pi_{l})$ 
of deterministic protocols $\Pi_{i}$ (over domain $\XtY$ with range $\B$). We say that $\Pi$ \emph{uses} $l$ guesses.

The \emph{number of accepting guesses} of $\Pi$ on input $(x,y)$ is defined as
\begin{equation*}
\acc_{\Pi}(x,y) \das \left|\left\{ i \in [l] \Mid f_{\Pi_{i}}(x,y) = 1 \right\}\right| = \sum_{i \in [l]} f_{\Pi_{i}}(x,y)
\enspace,
\end{equation*}
where $f_{\Pi_{i}}$ denotes the function computed by the deterministic protocol $\Pi_{i}$.

The \emph{number of rejecting guesses}, $\rej_{\Pi}(x,y)$, is defined analogously. Clearly, we have
$\acc_{\Pi}(x,y) + \rej_{\Pi}(x,y) = l$.

An \emph{acceptance mode} is a two-ary predicate. The only acceptance mode of interest in this work is the \emph{PP acceptance mode},
$\PP(\acc, \rej) \das (\acc > \rej)$.

A guess protocol $\Pi$ \emph{computes} a Boolean function $f$ in acceptance mode $\Xi$, if
\begin{equation*}
f(x,y) = 1 \iff \Xi\left(\acc_{\Pi}(x,y), \rej_{\Pi}(x,y)\right)
\enspace.
\end{equation*}
Given a guess protocol $\Pi$ and an acceptance mode $\Xi$, we denote by $f^{\Xi}_{\Pi}$ the Boolean function computed
by $\Pi$ in acceptance mode $\Xi$.

The \emph{(worst-case) communication cost}, $\PP(\Pi)$, of a guess protocol $\Pi$ is defined as
$\PP(\Pi) \das \lceil \log l \rceil + \max_{i \in [l]}\mrm{D}(\Pi_{i})$, where $\mrm{D}(\Pi_{i})$ denotes the worst-case communication cost of the
deterministic protocol $\Pi_{i}$.

The \emph{(worst-case) PP communication complexity}, $\PP(f)$, of a Boolean function $f$ is defined as the minimum worst-case
communication cost of a guess protocol computing $f$ in PP acceptance mode.

A family of guess protocols $\Pi = (\Pi_{n})_{n \geq 1}$ is called \emph{efficient},
if the communication cost of $\Pi_{n}$ is $\polylog(n)$.

The set $\sPcc$ is defined as the set of function families $f = (f_{n})_{n \geq 1}$ such that there exists an efficient family
of guess protocols $\Pi = (\Pi_{n})_{n \geq 1}$ with $f_{n} = \acc_{\Pi_{n}}$.

The communication complexity class $\PPcc$ is then defined as the set of languages with efficient PP communication complexity, \ABie
\begin{equation*}
\PPcc \das \{ L \mid \PP(L_{n}) = \polylog(n) \}
\enspace.
\end{equation*}
  
The following result is well-known in structural complexity theory. We give a proof for the sake of completeness.

\begin{proposition}\label{Prop:PP:Equivalence}
For a language $L$ the following statements are equivalent.
\begin{enumerate}
\item $L \in \PPcc$.
\item $\exists f = (f_{n})_{n \geq 1} \in \sPcc\colon\exists g = (g_{n})_{n \geq 1}\colon g_{n} \leq 2^{\polylog(n)}$
such that for all $n$-bit inputs $(x,y)$ we have
\begin{equation*}
(x,y) \in L_{n} \iff f_{n}(x,y) > g_{n}
\enspace.
\end{equation*}
\end{enumerate}
\end{proposition}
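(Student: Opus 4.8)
\medskip
\noindent\textbf{Proof proposal.}
The plan is to prove the two implications separately. Implication $1 \Rightarrow 2$ should come out immediately by unwinding the definitions of $\PPcc$ and $\sPcc$, so the only real work is in $2 \Rightarrow 1$, where a counting threshold condition has to be turned back into a genuine $\PP$-acceptance condition by a short padding argument with dummy guesses.

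For $1 \Rightarrow 2$, I would take an efficient family $\Pi = (\Pi_{n})_{n \geq 1}$ of guess protocols computing $L$ in $\PP$ acceptance mode, with $\Pi_{n}$ using $l_{n}$ guesses; since $\lceil \log l_{n}\rceil \leq \PP(\Pi_{n}) = \polylog(n)$ we have $l_{n} \leq 2^{\polylog(n)}$. Set $f_{n} \das \acc_{\Pi_{n}}$, so that $f = (f_{n})_{n\geq 1} \in \sPcc$ by the very definition of $\sPcc$. Because $\acc_{\Pi_{n}}(x,y) + \rej_{\Pi_{n}}(x,y) = l_{n}$, the $\PP$-acceptance predicate $\acc_{\Pi_{n}}(x,y) > \rej_{\Pi_{n}}(x,y)$ is equivalent to $2\,\acc_{\Pi_{n}}(x,y) > l_{n}$, and, since $\acc_{\Pi_{n}}(x,y)$ is a nonnegative integer, to $\acc_{\Pi_{n}}(x,y) > \lfloor l_{n}/2\rfloor$. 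Thus $g_{n} \das \lfloor l_{n}/2\rfloor \leq l_{n} \leq 2^{\polylog(n)}$ witnesses statement~2.

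For $2 \Rightarrow 1$, let $f_{n} = \acc_{\Pi_{n}}$ for an efficient family of guess protocols, $\Pi_{n}$ using $l_{n} \leq 2^{\polylog(n)}$ guesses each of cost $\polylog(n)$, and let $g_{n} \leq 2^{\polylog(n)}$ be as in statement~2. \ABWlog I would first assume $g_{n}$ is a nonnegative integer: if $g_{n} < 0$ then $L_{n}$ is the set of all pairs of $n$-bit words, trivially in $\PPcc$, and otherwise replacing $g_{n}$ by $\lfloor g_{n}\rfloor$ leaves the predicate $f_{n}(x,y) > g_{n}$ unchanged since $f_{n}$ is integer-valued; moreover if $g_{n} \geq l_{n}$ then $L_{n} = \emptyset$, again trivially in $\PPcc$. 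Now I would build a new guess protocol $\Pi'_{n}$ by adjoining to the list $\Pi_{n}$ a block of $p_{n}$ copies of a constant always-rejecting protocol and $q_{n}$ copies of a constant always-accepting protocol, each of cost $O(1)$, where $(p_{n},q_{n}) \das (2g_{n}-l_{n},0)$ if $2g_{n} \geq l_{n}$ and $(p_{n},q_{n}) \das (0,l_{n}-2g_{n})$ otherwise. Then $\acc_{\Pi'_{n}} = \acc_{\Pi_{n}} + q_{n}$ and $\rej_{\Pi'_{n}} = (l_{n} - \acc_{\Pi_{n}}) + p_{n}$, so $\acc_{\Pi'_{n}}(x,y) > \rej_{\Pi'_{n}}(x,y)$ holds iff $2\,\acc_{\Pi_{n}}(x,y) > l_{n} + p_{n} - q_{n} = 2g_{n}$, \ABie iff $f_{n}(x,y) > g_{n}$, \ABie iff $(x,y) \in L_{n}$; hence $\Pi'_{n}$ computes $L_{n}$ in $\PP$ acceptance mode. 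Finally, the number of guesses of $\Pi'_{n}$ is $l_{n}+p_{n}+q_{n} \leq 2(l_{n}+g_{n}) \leq 2^{\polylog(n)}$, so $\lceil \log(l_{n}+p_{n}+q_{n})\rceil = \polylog(n)$, while the worst-case cost among the component protocols is still $\polylog(n)$; therefore $\PP(\Pi'_{n}) = \polylog(n)$ and $L \in \PPcc$.

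The whole argument is routine; the one place that needs a little care is the choice of the padding parameters $(p_{n},q_{n})$ in $2 \Rightarrow 1$ together with the preliminary reduction to an integral, in-range threshold $g_{n}$, since those are the only steps where a sign slip or an off-by-one error would break the equivalence.
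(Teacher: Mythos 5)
Your proof is correct and follows essentially the same route as the paper: in $1 \Rightarrow 2$ you set $f_n = \acc_{\Pi_n}$ and $g_n = \lfloor l_n/2\rfloor$, and in $2 \Rightarrow 1$ you pad $\Pi_n$ with always-rejecting and/or always-accepting dummy protocols to shift the threshold $g_n$ to the majority threshold; the paper performs the same padding in two stages (first always-rejecting to ensure $l_n \geq 2g_n$, then $l_n - 2g_n$ always-accepting), but the resulting guess protocol is identical to yours. Your preliminary normalization of $g_n$ to a nonnegative integer in $\{0,\ldots,l_n\}$ is a small robustness check that the paper leaves implicit.
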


\begin{proof}
\fbox{$\Rightarrow$} Let $\Pi \das (\Pi_{n})_{n \geq 1}$ be an efficient family of guess protocols
$\Pi_{n} \das (\Pi^{(n)}_{1},\ldots,\Pi^{(n)}_{l_{n}})$ computing $L$ in PP acceptance mode. Efficiency implies $l_{n} \leq 2^{\polylog(n)}$.
Define $f_{n} \das \acc_{\Pi_{n}}$ and $g_{n} \das \lfloor l_{n} / 2 \rfloor$. Then $f \das (f_{n})_{n \geq 1} \in \sPcc$ and
\begin{align*} 
(x,y) \in L_{n} &\iff \acc_{\Pi_{n}}(x,y) > \rej_{\Pi_{n}}(x,y)\\
&\iff \acc_{\Pi_{n}}(x,y) > l_{n} - \acc_{\Pi_{n}}(x,y)\\
&\iff f_{n}(x,y) > g_{n}
\enspace.
\end{align*}
\fbox{$\Leftarrow$} Let $f = (f_{n})_{n \geq 1} \in \sPcc$ and $g = (g_{n})_{n \geq 1}$, $g_{n} \leq 2^{\polylog(n)}$ be given.
Then there exists an efficient family $\Pi = (\Pi_{n})_{n \geq 1}$ of guess protocols
$\Pi_{n} \das (\Pi^{(n)}_{1},\ldots,\Pi^{(n)}_{l_{n}})$ such that $f_{n} = \acc_{\Pi_{n}}$. \ABWlog we can assume that
$l_{n} \geq 2g_{n}$. Otherwise, we add $(2g_{n} - l_{n})$ many trivial, always-rejecting protocols to $\Pi_{n}$.
We define an efficient family $\widetilde{\Pi} \das (\widetilde{\Pi}_{n})_{n \geq 1}$ of guess protocols as follows.
The protocol $\widetilde{\Pi}_{n}$ consists of the sequence of deterministic protocols in $\Pi_{n}$ plus $(l_{n} - 2g_{n})$
many trivial, always-accepting protocols. Clearly, we have
\begin{align*}
\acc_{\widetilde{\Pi}_{n}}(x,y) &= \acc_{\Pi_{n}}(x,y) + l_{n} - 2g_{n}
\enspace,\\
\rej_{\widetilde{\Pi}_{n}}(x,y) &= \rej_{\Pi_{n}}(x,y) = l_{n} - \acc_{\Pi_{n}}(x,y)
\enspace.
\end{align*}
As a consequence, we obtain
\begin{align*}
\acc_{\widetilde{\Pi}_{n}}(x,y) > \rej_{\widetilde{\Pi}_{n}}(x,y)
&\iff \acc_{\Pi_{n}}(x,y) + l_{n} - 2g_{n} > l_{n} - \acc_{\Pi_{n}}(x,y)\\
&\iff \acc_{\Pi_{n}}(x,y) > g_{n}\\
&\iff (x,y) \in L_{n}
\enspace,
\end{align*}
where the last equivalence is by the assumption. Hence, $\widetilde{\Pi}$ is an efficient family of guess protocols
computing $L$ in PP acceptance mode, \ABie $L \in \PPcc$. 
\end{proof}

In the remaining part of this subsection, we transfer results of \cite{DBLP:conf/stoc/BeigelRS91} to communication complexity.

Given a guess protocol $\Pi \das (\Pi_{1},\ldots,\Pi_{l})$, we consider the guess protocol $\Pi' \das (\Pi_{1},\Pi_{1},\ldots,\Pi_{l},$ $\Pi_{l},0)$,
where $0$ denotes the always-rejecting protocol. Then $\Pi'$ has the property that $\acc_{\Pi'}(x,y) \neq \rej_{\Pi'}(x,y)$
for all input pairs $(x,y)$, and $\Pi'$ computes the same function as $\Pi$ in PP acceptance mode.
Hence, \ABwlog we can assume that every guess protocol has the above property.

We adapt the convenient notation of \cite{DBLP:conf/coco/FennerFK91} and define $\gap_{\Pi}(x,y) \das \acc_{\Pi}(x,y) - \rej_{\Pi}(x,y)$.
Then, we have
\begin{align*}
f^{\PP}_{\Pi}(x,y) = 1 &\implies \gap_{\Pi}(x,y) > 0
\enspace,\\
f^{\PP}_{\Pi}(x,y) = 0 &\implies \gap_{\Pi}(x,y) < 0
\enspace.   
\end{align*}

For a deterministic protocol $\Pi$ with range $\B$, we define its \emph{complement} $\overline{\Pi}$ as the protocol which accepts
iff $\Pi$ rejects. Given a guess protocol $\Pi \das (\Pi_{1},\ldots,\Pi_{l})$, we define its \emph{complement}, $\overline{\Pi}$, as
$\overline{\Pi} \das (\overline{\Pi}_{1},\ldots,\overline{\Pi}_{l})$. Clearly, we have $\gap_{\overline{\Pi}} = -\gap_{\Pi}$.

Given two guess protocols $\Pi \das (\Pi_{1},\ldots,\Pi_{l_1})$ and $\Pi' \das (\Pi'_{1},\ldots,\Pi'_{l_2})$, respectively, we define their
\emph{sum}, $\Pi + \Pi'$, as $\Pi + \Pi' \das (\Pi_{1},\ldots,\Pi_{l_1},\Pi'_{1},\ldots,\Pi'_{l_2})$. Here, we have
$\gap_{\Pi + \Pi'} = \gap_{\Pi} + \gap_{\Pi'}$.

Let $\Pi$ and $\Pi'$ be two deterministic protocols. We define their \emph{product}, $\Pi * \Pi'$, as the deterministic protocol, which runs
as follows. First, $\Pi$ is executed. If $\Pi$ accepts, then $\Pi'$ is executed, else $\overline{\Pi'}$ is executed.
Given two guess protocols $\Pi \das (\Pi_{1},\ldots,\Pi_{l_1})$ and $\Pi' \das (\Pi'_{1},\ldots,\Pi'_{l_2})$, respectively, we define their
\emph{product}, $\Pi * \Pi'$, as $\Pi * \Pi' \das (\Pi_{1} * \Pi'_{1},\ldots,\Pi_{1} * \Pi'_{l_2},\ldots,\Pi_{l_1} * \Pi'_{l_2})$.
In this case, $\gap_{\Pi * \Pi'} = \gap_{\Pi} \cdot \gap_{\Pi'}$.

The following lemma corresponds to \cite[Lemma 5]{DBLP:conf/stoc/BeigelRS91}.

\begin{lemma}\label{PP:Lemma1}
Let $\Pi_{1},\ldots,\Pi_{k}$ be guess protocols using at most $l$ guesses and having communication cost at most $c$.
Let $p(z_{1},\ldots,z_{k})$ be a polynomial of degree $d$ with integer coefficients bounded above in absolute value by $M$.
Then there exists a guess protocol $\Pi$ such that
\begin{equation*}
\gap_{\Pi}(x,y) = p\left(\gap_{\Pi_{1}}(x,y),\ldots,\gap_{\Pi_{k}}(x,y)\right)
\enspace,
\end{equation*}
$\Pi$ uses at most
\begin{equation*}
M l^{d}(d + k)^{k + 1}
\end{equation*}
many guesses, and has communication cost bounded above by
\begin{equation*}
\left\lceil \log M + d\log l + (k+1)\log(d + k) \right\rceil + cd
\enspace.
\end{equation*}
\end{lemma}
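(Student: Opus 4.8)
\noindent
The plan is to build $\Pi$ out of $\Pi_1,\ldots,\Pi_k$ using only the gap-preserving operations already introduced — sum, product, complement — together with the trivial single-guess protocols (the always-accepting one has gap $+1$, the always-rejecting one has gap $-1$). First I would record how these operations act on the two relevant parameters. Writing $\Sigma,\Sigma'$ for guess protocols using $l_1,l_2$ guesses whose deterministic components all cost at most $c_1,c_2$ respectively: the sum $\Sigma+\Sigma'$ has $\gap_{\Sigma+\Sigma'}=\gap_\Sigma+\gap_{\Sigma'}$, uses $l_1+l_2$ guesses, and its components cost at most $\max\{c_1,c_2\}$; the product $\Sigma*\Sigma'$ has $\gap_{\Sigma*\Sigma'}=\gap_\Sigma\cdot\gap_{\Sigma'}$, uses $l_1l_2$ guesses, and its components cost at most $c_1+c_2$ (run one deterministic component, then the other or its complement); the complement $\overline{\Sigma}$ has $\gap_{\overline{\Sigma}}=-\gap_\Sigma$ and the same guess count and component costs as $\Sigma$. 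Note that the hypothesis $\PP(\Pi_i)\le c$ gives in particular that each of the at most $l$ deterministic components of each $\Pi_i$ has cost at most $c$.

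Next I would expand $p(z_1,\ldots,z_k)=\sum_\alpha a_\alpha z_1^{\alpha_1}\cdots z_k^{\alpha_k}$, the sum ranging over multi-indices $\alpha=(\alpha_1,\ldots,\alpha_k)\in\N^k$ with $|\alpha|\das\alpha_1+\cdots+\alpha_k\le d$ and $a_\alpha\in\mbb{Z}$, $|a_\alpha|\le M$, and realize each term of $p$ separately. For $|\alpha|\ge1$ let $\Pi^{(\alpha)}$ be the iterated product taking $\alpha_j$ copies of $\Pi_j$ for each $j\in[k]$; for $\alpha=0$ let $\Pi^{(\alpha)}$ be the always-accepting single-guess protocol. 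Applying the product identity repeatedly, $\gap_{\Pi^{(\alpha)}}=\prod_j\gap_{\Pi_j}^{\alpha_j}$; moreover $\Pi^{(\alpha)}$ uses at most $l^{|\alpha|}\le l^d$ guesses and its deterministic components cost at most $c\,|\alpha|\le cd$. To install the coefficient, let $\widetilde{\Pi}^{(\alpha)}$ be the sum of $|a_\alpha|$ copies of $\Pi^{(\alpha)}$ if $a_\alpha\ge0$, and of $|a_\alpha|$ copies of $\overline{\Pi^{(\alpha)}}$ if $a_\alpha<0$; then $\gap_{\widetilde{\Pi}^{(\alpha)}}=a_\alpha\prod_j\gap_{\Pi_j}^{\alpha_j}$, the protocol uses at most $|a_\alpha|\,l^d\le Ml^d$ guesses, and its components still cost at most $cd$.

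Finally let $\Pi\das\sum_\alpha\widetilde{\Pi}^{(\alpha)}$. By additivity of the gap,
\begin{equation*}
\gap_\Pi(x,y)=\sum_\alpha a_\alpha\prod_j\gap_{\Pi_j}(x,y)^{\alpha_j}=p\bigl(\gap_{\Pi_1}(x,y),\ldots,\gap_{\Pi_k}(x,y)\bigr)\enspace.
\end{equation*}
There are $\binom{d+k}{k}\le(d+k)^{k+1}$ multi-indices with $|\alpha|\le d$, so $\Pi$ uses at most $Ml^d(d+k)^{k+1}$ guesses; and since every deterministic component of $\Pi$ is a component of some $\widetilde{\Pi}^{(\alpha)}$ and hence costs at most $cd$, the communication cost of $\Pi$ is at most
\begin{equation*}
\bigl\lceil\log\bigl(Ml^d(d+k)^{k+1}\bigr)\bigr\rceil+cd=\bigl\lceil\log M+d\log l+(k+1)\log(d+k)\bigr\rceil+cd\enspace.
\end{equation*}

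\noindent
This is the communication-complexity counterpart of \cite[Lemma~5]{DBLP:conf/stoc/BeigelRS91}, and I do not expect any real difficulty. The one thing to be careful about is the bookkeeping: checking that the two-protocol gap identities for sums and products compose correctly under the iterated products forming the monomials and under the outer sum, and that the constant term ($\alpha=0$) and the negative coefficients are absorbed by the trivial protocols and by complementation. The monomial count is a routine estimate, and everything else follows from the gap arithmetic established before the lemma.
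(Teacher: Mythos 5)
Your proof is correct and follows essentially the same construction as the paper's: you build $\Pi$ by expanding $p$ into monomials, realize each monomial via the iterated product, repeat it $|a_\alpha|$ times (complementing for negative coefficients) to install the coefficient, and take the outer sum — which is exactly the paper's ``guess a monomial, guess one of $c$ branches, run the product'' described bottom-up rather than top-down. The only difference is that you write out the gap-arithmetic bookkeeping explicitly where the paper sketches it; both arrive at the same guess count $Ml^{d}(d+k)^{k+1}$ and the same communication cost bound.
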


\begin{proof}
The guess protocol $\Pi$ first guesses a monomial of $p$, say $\pm c z^{\alpha_{1}}_{1} \cdots z^{\alpha_{k}}_{k}$, with $c > 0$.
As $p$ has at most $\sum_{i = 0}^{d}{i + k \choose k}$ monomials, this requires using $\leq (d + k)^{k + 1}$ guesses.
Then $\Pi$ guesses one of $c$ branches, computes the product as described above, and complements if necessary.
Here $\Pi$ uses $M \cdot l^{d}$ additional guesses.
\end{proof}

The \emph{degree} of a rational function is defined as the maximum of the degrees of its numerator and denominator.
The following lemma corresponds to \cite[Lemma 6]{DBLP:conf/stoc/BeigelRS91}.

\begin{lemma}\label{PP:Lemma2}
Let $\Pi_{1},\ldots,\Pi_{k}$ be guess protocols using at most $l$ guesses and having communication cost at most $c$.
Let $r(z_{1},\ldots,z_{k})$ a rational function of degree $d$ with integer coefficients bounded above in absolute value by $M$.
Then there exists a guess protocol $\Pi$ such that $\gap_{\Pi}(x,y)$ and
\begin{equation*}
r\left(\gap_{\Pi_{1}}(x,y),\ldots,\gap_{\Pi_{k}}(x,y)\right)
\end{equation*}
have the same sign for all $(x,y)$ where the latter is defined, $\Pi$ uses at most
\begin{equation*}
\left( M l^{d} (2d + k)^{k + 1} \right)^{2}
\end{equation*}
many guesses, and has communication cost bounded above by
\begin{equation*}
2 \left( \left\lceil \log M + d\log l + (k+1)\log(2d + k) \right\rceil + cd \right)
\enspace.
\end{equation*}
\end{lemma}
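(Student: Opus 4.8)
The plan is to reduce to Lemma~\ref{PP:Lemma1} by replacing the rational function with a single polynomial that has the same sign everywhere it matters. Write $r = p/q$ where $p$ and $q$ are polynomials in $k$ variables with integer coefficients, each of degree at most $d$ and with all coefficients bounded in absolute value by $M$. Since the $\gap$-values are integers, for every input $(x,y)$ both $p(\gap_{\Pi_{1}}(x,y),\ldots,\gap_{\Pi_{k}}(x,y))$ and $q(\gap_{\Pi_{1}}(x,y),\ldots,\gap_{\Pi_{k}}(x,y))$ are integers, and whenever $q(\ldots) \neq 0$ --- that is, exactly when $r(\ldots)$ is defined --- the integer $p(\ldots)\cdot q(\ldots)$ has the same sign as the quotient $p(\ldots)/q(\ldots) = r(\ldots)$. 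Hence it suffices to realize, in the gap of some guess protocol, the value of the \emph{polynomial} $P \das p\cdot q$.

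Next I would bound the parameters of $P$. It has integer coefficients and degree at most $2d$. For the coefficient bound: the coefficient of a given monomial $z^{\alpha}$ in $P$ is a sum over the ways of writing $\alpha = \beta + \gamma$ with $z^{\beta}$ a monomial of $p$; there are at most $\sum_{i=0}^{d}\binom{i+k}{k} \leq (2d+k)^{k+1}$ such monomials (as in the proof of Lemma~\ref{PP:Lemma1}, using $d+k \leq 2d+k$), and each contributes a product of two coefficients, each of absolute value at most $M$. So every coefficient of $P$ has absolute value at most $M^{2}(2d+k)^{k+1}$. (If $p$ or $q$ has degree below $d$, or $P$ below $2d$, this only improves the bounds; one may pad with zero terms to get a clean degree-$2d$ statement.)

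Then I would invoke Lemma~\ref{PP:Lemma1} with the polynomial $P$, degree parameter $2d$, $k$ variables, and coefficient bound $M' \das M^{2}(2d+k)^{k+1}$. This produces a guess protocol $\Pi$ with $\gap_{\Pi}(x,y) = P(\gap_{\Pi_{1}}(x,y),\ldots,\gap_{\Pi_{k}}(x,y)) = p(\ldots)\cdot q(\ldots)$, which by the first paragraph has the same sign as $r(\ldots)$ wherever $r$ is defined. Substituting $D = 2d$ and $M' = M^{2}(2d+k)^{k+1}$ into the guess count $M'l^{D}(D+k)^{k+1}$ of Lemma~\ref{PP:Lemma1} gives $M^{2}l^{2d}(2d+k)^{2(k+1)} = \bigl(Ml^{d}(2d+k)^{k+1}\bigr)^{2}$, and into the cost $\lceil\log M' + D\log l + (k+1)\log(D+k)\rceil + cD$ gives communication cost at most $2\bigl(\lceil\log M + d\log l + (k+1)\log(2d+k)\rceil + cd\bigr)$, using $\log M' = 2\log M + (k+1)\log(2d+k)$ and $\lceil 2x\rceil \leq 2\lceil x\rceil$. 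These are exactly the claimed bounds.

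I do not expect a genuine obstacle; the only care required is the bookkeeping in the coefficient estimate for $P$ and verifying that the two expressions from Lemma~\ref{PP:Lemma1} collapse precisely to the stated forms --- in particular that the doubling of $M$ inside the logarithm, absorbed into the factor $(2d+k)^{k+1}$ on the coefficient bound, is what makes the ceiling term at most double the corresponding term of Lemma~\ref{PP:Lemma1}.
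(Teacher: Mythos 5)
Your proof is correct and takes essentially the same approach as the paper: write $r = p/q$, observe that $p\cdot q$ has the same sign as $r$ wherever $q\neq 0$, and apply Lemma~\ref{PP:Lemma1} to the polynomial $p\cdot q$ with degree parameter $2d$. The only cosmetic difference is your coefficient bound $M^{2}(2d+k)^{k+1}$ in place of the paper's $M^{2}\binom{2d+k}{k}$, but both estimates collapse to exactly the stated guess-count and cost bounds, and your explicit arithmetic (including $\lceil 2x\rceil\le 2\lceil x\rceil$) correctly fills in what the paper leaves implicit.
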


\begin{proof}
Let $r = p / q$.
We just apply \ref{PP:Lemma1} with polynomial $p \cdot q$. The degree of this polynomial is at most $2d$ and the absolute values of its
coefficients are bounded above by $M^{2} \cdot {2d + k \choose k}$.
\end{proof}

The following functions are defined and studied in \cite{DBLP:conf/stoc/BeigelRS91}.
\begin{align*}
P_{m}(z) &\das (z - 1)\prod_{i = 1}^{m}(z - 2^{i})^{2}
\enspace,\\
S^{(k)}_{m}(z) &\das \frac{\left(P_{m}(-z)\right)^{h(k)} - \left(P_{m}(z)\right)^{h(k)}}
{\left(P_{m}(-z)\right)^{h(k)} + \left(P_{m}(z)\right)^{h(k)}}
\enspace,\\
T^{(k)}_{m}(z_{1},\ldots,z_{k}) &\das 2S^{(2k)}_{m}(z_{1}) + \cdots + 2S^{(2k)}_{m}(z_{k}) + 1
\enspace.
\end{align*}
Here, $h(k)$ denotes the least odd integer greater than or equal to $\log(2k + 1)$.

The following proposition corresponds to \cite[Lemma 9 and 10]{DBLP:conf/stoc/BeigelRS91}.
 
\begin{proposition}\mbox{}\label{Prop:DegM}
\begin{enumerate}
\item The degree of $P^{h(k)}_{m}$ is $h(k)(2m + 1)$ and the absolute value of each of its coefficients is bounded by
$2^{2h(k)\log h(k)) + 3h(k)m\log(2m+1)}$.
\item If $1 \leq z \leq 2^{m}$ then $1 \leq S^{(k)}_{m}(z) < 1 + 1/k$. If $-2^{m} \leq z \leq -1$ then $-1 - 1/k < S^{(k)}_{m}(z) \leq -1$.
The rational function $S^{(k)}_{m}(z)$ has degree $\leq h(k)(2m + 1)$ and the absolute value of each of its coefficients is bounded by
$2^{1 + 2h(k)\log h(k)) + 3h(k)m\log(2m+1)}$.
\item Assume that $1 \leq |z_{i}| \leq 2^{m}$ for $1 \leq i \leq k$. Then $T^{(k)}_{m}$ is a rational function that is positive
if at last half of the $z_{i}$'s are positive, and negative otherwise. The degree of $T^{(k)}_{m}$ is $\leq h(2k)(2m + 1)$, and the absolute
value of each of its coefficients is bounded by $2^{3h(2k)(\log h(2k) + m\log(2m+1) + 1)}$.
\end{enumerate}
\end{proposition}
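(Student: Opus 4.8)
The plan is to handle the three items in turn, using that $P_m$, $S^{(k)}_m$ and $T^{(k)}_m$ are literally the functions of \cite{DBLP:conf/stoc/BeigelRS91}: the qualitative sign-and-magnitude claims are their Lemmas~9 and~10, re-derived from the factored form of $P_m$, so the only new work is to carry the degree and coefficient bounds through the construction. For item~1, $\deg P_m = 1+2m$ is immediate from $(z-1)\prod_{i=1}^{m}(z-2^{i})^{2}$, hence $P_m^{h(k)}$ has degree $h(k)(2m+1)$. To bound its coefficients I would first bound those of $P_m$ itself --- each coefficient of the expanded product is, up to sign, an elementary symmetric function of the multiset of roots $\{1,2^{1},2^{1},\dots,2^{m},2^{m}\}$, hence bounded by a binomial factor times a product of powers of $2$ --- and then, raising to the power $h(k)$ and expanding by the multinomial theorem, each coefficient of $P_m^{h(k)}$ is a sum of at most $(2m+2)^{h(k)}$ products of $h(k)$ coefficients of $P_m$; a routine simplification then collapses this to the bound in the statement.

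For item~2, fix $1\le z\le 2^{m}$. Every factor $(z-2^{i})^{2}$ is non-negative and $z-1\ge 0$, so $P_m(z)\ge 0$, whereas $P_m(-z)=-(z+1)\prod_{i=1}^{m}(z+2^{i})^{2}<0$; since $h(k)$ is odd it follows that $(P_m(z))^{h(k)}\ge 0$ and $(P_m(-z))^{h(k)}<0$. Writing $u\das(P_m(z))^{h(k)}$ and $w\das\bigl|(P_m(-z))^{h(k)}\bigr|$, a one-line computation gives $S^{(k)}_m(z)=(w+u)/(w-u)$, and then $1\le S^{(k)}_m(z)<1+1/k$ is equivalent to $(2k+1)\,u<w$, i.e.\ to $|P_m(-z)|^{h(k)}>(2k+1)\,|P_m(z)|^{h(k)}$. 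This is where I would invoke the design property of $P_m$ from \cite{DBLP:conf/stoc/BeigelRS91}, namely $|P_m(-z)|\ge 2\,|P_m(z)|$ for $z\in[1,2^{m}]$: together with $h(k)>\log(2k+1)$ --- strict, because $2k+1\ge 3$ is an odd integer and hence not a power of $2$, so $\log(2k+1)$ is not an integer --- it yields $|P_m(-z)|^{h(k)}\ge 2^{h(k)}|P_m(z)|^{h(k)}>(2k+1)|P_m(z)|^{h(k)}$. The case $-2^{m}\le z\le -1$ then follows from the identity $S^{(k)}_m(-z)=-S^{(k)}_m(z)$, which is immediate from the definition. For the quantitative part, the numerator $(P_m(-z))^{h(k)}-(P_m(z))^{h(k)}$ and the denominator $(P_m(-z))^{h(k)}+(P_m(z))^{h(k)}$ of $S^{(k)}_m$ are sums or differences of the two polynomials bounded in item~1 (which have equal degree and equal coefficient magnitudes), so $\deg S^{(k)}_m\le h(k)(2m+1)$ and its coefficients are bounded by twice the item~1 bound, matching the claimed coefficient bound.

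For item~3, apply item~2 with $2k$ in place of $k$: whenever $1\le|z_i|\le 2^{m}$, the term $2S^{(2k)}_m(z_i)$ lies in $[2,2+1/k)$ if $z_i>0$ and in $(-2-1/k,-2]$ if $z_i<0$. If $a$ of the $z_i$ are positive and $b=k-a$ are negative, then $T^{(k)}_m(z_1,\dots,z_k)=1+\sum_{i}2S^{(2k)}_m(z_i)$ satisfies $T^{(k)}_m(z_1,\dots,z_k)>1+2a-(2+1/k)b$ and $T^{(k)}_m(z_1,\dots,z_k)\le 1+(2+1/k)a-2b$; since $a,b$ are non-negative integers with $a+b=k$, a one-line check shows the lower bound is positive when $a\ge b$ and the upper bound is negative when $a<b$, which is exactly the asserted sign behaviour. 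Since, up to the additive constant $1$ and the factor $2$, $T^{(k)}_m$ is just the sum of the univariate rational functions $S^{(2k)}_m(z_i)$ in separate variables, its degree is at most that of $S^{(2k)}_m$, namely $h(2k)(2m+1)$, and its coefficients are bounded by twice the item~2 bound for parameter $2k$, which, using $h(2k)\ge 3$, fits inside $2^{3h(2k)(\log h(2k)+m\log(2m+1)+1)}$.

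The part I expect to be the genuine obstacle is not any single computation but the analytic estimate $|P_m(-z)|\ge 2\,|P_m(z)|$ on $[1,2^{m}]$ on which item~2, and hence everything, rests --- but this is precisely the inequality established in \cite{DBLP:conf/stoc/BeigelRS91} and can be taken over verbatim. What remains is clerical: making the exponentiation estimate of item~1 collapse to the advertised closed form, and keeping track, for the multivariate $T^{(k)}_m$, of exactly which notion of degree and coefficient size the statement intends --- it is most naturally read in terms of the univariate numerator and denominator of each summand $S^{(2k)}_m(z_i)$, since clearing all $k$ denominators simultaneously would inflate both by a factor depending on $k$.
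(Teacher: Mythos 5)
Your proof is correct and follows the same high-level skeleton as the paper's appendix proof: read off $\deg P_m = 2m+1$ from the factored form, bound $\lc P_m^{h(k)}$, and then observe that $S^{(k)}_m$ (resp.\ $T^{(k)}_m$) inherits the degree of $P_m^{h(k)}$ (resp.\ of $S^{(2k)}_m$) with coefficient magnitudes at most doubled. You differ in two places. First, in bounding $\lc P_m^{h(k)}$: the paper states and uses the lemma $\lc PQ \le (\deg P + \deg Q)(\lc P)(\lc Q)$ and its consequence $\lc P^n \le n!\,(\deg P \cdot \lc P)^n$, whereas you bound $\lc P_m$ via elementary symmetric functions of the multiset of roots and then raise to the power $h(k)$ with the multinomial theorem --- a different but equally elementary tactic. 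Second, and more substantively, the paper's appendix proves \emph{only} the degree and coefficient bounds and silently delegates the sign-and-magnitude assertions to \cite{DBLP:conf/stoc/BeigelRS91}; you actually re-derive them: the sign analysis of $P_m(\pm z)$, the algebraic reduction $S^{(k)}_m(z) = (w+u)/(w-u)$, the observation that $h(k) > \log(2k+1)$ is strict because $2k+1$ is odd, the antisymmetry $S^{(k)}_m(-z) = -S^{(k)}_m(z)$, and the interval arithmetic for $T^{(k)}_m$. Your version is therefore strictly more self-contained; the only imported ingredient is the inequality $|P_m(-z)| \ge 2|P_m(z)|$ on $[1,2^m]$, which is also what the paper implicitly assumes. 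Your closing remark about which notion of degree/coefficient is intended for the multivariate $T^{(k)}_m$ is well taken and matches the paper's (implicit) per-summand convention.

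One caution on the step you wave off as ``routine.'' The constant term of $P_m$ has magnitude $\prod_{i=1}^{m} 4^i = 2^{m(m+1)}$, so $\lc P_m^{h(k)} \ge 2^{h(k)\,m(m+1)}$, and for moderately large $m$ this already exceeds $2^{2h(k)\log h(k) + 3h(k)m\log(2m+1)}$. The stated coefficient bound (and the paper's own intermediate claim $\lc P_m \le (2m)!\,(2m)2^m$) therefore cannot be right as written; almost certainly a term has been garbled in transcription from \cite{DBLP:conf/stoc/BeigelRS91} (e.g.\ $m\log(2m+1)$ should read $m(2m+1)$), in which case both your multinomial route and the paper's $\lc$-lemma route deliver the corrected bound comfortably. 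The defect lies in the target expression, not in your method, but it is not a ``clerical'' collapse --- flag it rather than assert it.
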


A proof is given in the appendix.

\begin{proposition}\label{Prop:Majority}
Let $\Pi_{1},\ldots,\Pi_{k}$ be PP protocols with communication cost at most $c$. Then there exists a PP protocol $\Pi$ with
communication cost at most $\mcal{O}\!\left(k(\log k)(\log c) + c^{2}(\log k)\right)$ such that $\gap_{\Pi}(x,y)$ and
$T^{(k)}_{c}(\gap_{\Pi_{1}}(x,y),\ldots,\gap_{\Pi_{k}}(x,y))$ have the same sign for all input pairs $(x,y)$.
\end{proposition}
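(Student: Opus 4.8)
The plan is to realise the rational function $T^{(k)}_c$, evaluated at the gaps of $\Pi_1,\dots,\Pi_k$, as the sign of $\gap_\Pi$ for a single guess protocol $\Pi$ produced by Lemma~\ref{PP:Lemma2}, and then to read the cost of $\Pi$ off the degree and coefficient bounds of Proposition~\ref{Prop:DegM}(3). First I would normalise: by the remark preceding Lemma~\ref{PP:Lemma1} we may assume $\gap_{\Pi_i}(x,y)\neq 0$ for all $i$ and all $(x,y)$, at a cost of $O(1)$ extra bits of communication per protocol. Since $\PP(\Pi_i)\le c$, each $\Pi_i$ uses $l_i\le 2^c$ guesses and each of its deterministic subprotocols has cost at most $c$, whence $|\gap_{\Pi_i}(x,y)|\le l_i\le 2^c$. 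Combined with the normalisation this gives $1\le|\gap_{\Pi_i}(x,y)|\le 2^c$ for all $(x,y)$; the additive $O(1)$ incurred is harmless and I will fold it into the final bound.

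Writing $z_i\das\gap_{\Pi_i}(x,y)$, these coordinates are integers with $1\le|z_i|\le 2^c$, so Proposition~\ref{Prop:DegM}(3) with $m\das c$ applies: $T^{(k)}_c$ is defined at $(z_1,\dots,z_k)$, it is positive iff at least half of the $z_i$ are positive, it has degree at most $d\das h(2k)(2c+1)$, and its integer coefficients have absolute value at most $M\das 2^{3h(2k)(\log h(2k)+c\log(2c+1)+1)}$. Feeding $\Pi_1,\dots,\Pi_k$ (with guess bound $l=2^c$ and cost bound $c$) together with $r=T^{(k)}_c$, of degree at most $d$ and coefficient bound $M$, into Lemma~\ref{PP:Lemma2} produces a guess protocol $\Pi$ such that $\gap_\Pi(x,y)$ and $T^{(k)}_c(\gap_{\Pi_1}(x,y),\dots,\gap_{\Pi_k}(x,y))$ have the same sign for every $(x,y)$ (the qualifier ``where the latter is defined'' is vacuous, since $T^{(k)}_c$ is everywhere defined on the relevant arguments), and with communication cost at most $2\bigl(\lceil \log M+d\log l+(k+1)\log(2d+k)\rceil+cd\bigr)$.

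It remains to check that this quantity is $O\bigl(k(\log k)(\log c)+c^2(\log k)\bigr)$. Using $h(2k)=O(\log k)$, $\log l\le c$ and hence $d=O(c\log k)$, each summand is straightforward to estimate: the dominant part of $\log M$ is $h(2k)\cdot c\log(2c+1)=O(c\log c\,\log k)\subseteq O(c^2\log k)$ and the remaining part of $\log M$ is $O(\log^2 k)\subseteq O(k\log k\log c)$; moreover $d\log l=O(c^2\log k)$, $cd=O(c^2\log k)$, and $(k+1)\log(2d+k)=O\bigl(k(\log k+\log c)\bigr)=O(k\log k\log c)$. Summing over the four terms and doubling gives the claimed bound. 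The only genuinely delicate part is this last accounting --- confirming that the superficially enormous guess-count blow-up of Lemma~\ref{PP:Lemma2} collapses to logarithmic communication cost; the conceptual heart, namely a low-degree, small-coefficient rational function that extracts the majority of the signs and is well defined on integer arguments of magnitude in $[1,2^c]$, is already packaged in Proposition~\ref{Prop:DegM}.
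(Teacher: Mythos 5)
Your proof follows the same route as the paper: apply Lemma~\ref{PP:Lemma2} to $\Pi_1,\ldots,\Pi_k$ with $r\das T^{(k)}_c$, pull the degree and coefficient bounds from Proposition~\ref{Prop:DegM}(3), and estimate the resulting cost. The argument is correct. Two small remarks. First, you are more careful than the paper in citing the normalisation that makes each $\gap_{\Pi_i}$ nonzero (so that $1\le|\gap_{\Pi_i}|\le 2^c$ and Proposition~\ref{Prop:DegM}(3) applies with $m=c$); the paper leans on the remark preceding Lemma~\ref{PP:Lemma1} without restating it. Second, you take the guess bound to be $l=2^c$, which is what the definition $\PP(\Pi_i)=\lceil\log l_i\rceil+\max_j\mrm{D}(\Pi^{(i)}_j)\le c$ actually gives; the paper's proof asserts ``the protocols use at most $c$ guesses,'' i.e. $l=c$, which appears to be a slip (or shorthand for $\lceil\log l\rceil\le c$). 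Since the $cd$ term already contributes $\Theta(c^2\log k)$, the discrepancy between $d\log l=O(c^2\log k)$ (your version) and $d\log c=O(c\log c\log k)$ (the paper's) is absorbed and the final bound $\mcal{O}\!\left(k(\log k)(\log c)+c^2(\log k)\right)$ is unchanged.
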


\begin{proof}
We apply \ref{PP:Lemma2} on $\Pi_{1},\ldots,\Pi_{k}$ and $r \das T^{(k)}_{c}$ of degree $d \das \deg r \leq h(2k)(2c+1)$, where the absolute value of each
of the coefficients in $r$ is bounded by
\begin{equation*}
M \leq 2^{3h(2k)(\log h(2k) + c\log(2c+1) + 1)}
\enspace.
\end{equation*}
The protocols use at most $c$ guesses.
We obtain the desired protocol $\Pi$ with communication cost bounded above by
\begin{gather*}
2 \left( \left\lceil \log M + d\log l + (k+1)\log(2d + k) \right\rceil + cd \right)\\
\leq 2\left(3h(2k)(\log h(2k) + c\log(2c+1) + 1) + h(2k)(2c+1)(\log c) \right.\\
\left.{}+ (k+1)\log(2h(2k)(c+1) + k) + h(2k)c(2c+1) + 1 \right)\\
= \mcal{O}\!\left(k(\log k)(\log c) + c^{2}(\log k)\right)
\enspace.
\end{gather*} 
\end{proof}

This lays the ground for the probability amplification result for randomized PP-protocols stated in the next section.
 
\subsection{On $\opBP\PPcc$}
Assume we are given a computation model $C$ together with a cost function measuring the resources consumed during a computation.
First of all, this gives us a complexity measure $D$ by taking the infimum of the cost function over all ``$C$-machines'' in the computation model.
In addition, if we are given a notion of efficiency, we can define a complexity class $\mcal{C}$ including all decision problems $L$
with a complexity $D(L)$ that is considered efficient.
It is interesting to study the power of randomization by enriching the computation model $C$ with random bits. This can be done by
defining a \emph{random} $C$-machine as a probability distribution over $C$-machines, together with an acceptance mode, \ABeg bounded error.
Again, we have a cost measure, we can define a complexity measure $R$, and thus, we can also define a complexity class $\widetilde{\mcal{C}}$
including all decision problems $L$ with a complexity $R(L)$ that is considered efficient.
In structural complexity theory it has proven useful to define complexity class operators, \ABeg the BP-operator. In our case, this
operator describes the relationship between the complexity classes $\mcal{C}$ and $\widetilde{\mcal{C}}$, namely
$\widetilde{\mcal{C}} = \opBP\mcal{C}$. In communication complexity, we can even go one step further. Here, it is possible to
express the complexity measure $R$ as a perturbation of $D$, \ABie $R = \mrm{BP}\cdot D$.
Hence, we arrive at an equation like  
$\opBP\mcal{C} = \{ L \mid (\mrm{BP}\cdot D)(L) \text{ is efficient} \}$.
In the following, we work out all the details to obtain a precise statement of this kind for the class $\opBP\PPcc$.
 
A \emph{randomized} PP-protocol $\Pi$ (over domain $\XtY$ with range $\B$) is defined as a probability distribution $\alpha\colon A \to [0,1]$
over a finite set $\{ \Pi_{a} \mid a \in A \}$ of PP-protocols $\Pi_{a}$ (each over domain $\XtY$ with range $\B$).

We say that $\Pi$ \emph{computes} a Boolean function $f$ \emph{with (two-sided) $\epsilon$-error}, if for all $n$-bit input pairs $(x,y)$ we have
\begin{equation*}
\Pr_{\alpha}\left[ f^{\PP}_{\Pi_{\alpha}}(x,y) \neq f(x,y) \right] \leq \epsilon
\enspace.
\end{equation*}

The \emph{(worst-case) communication cost} of a randomized PP-protocol is defined as the maximum worst-case communication cost over all
PP-protocols with non-zero probability, \ABie $\mrm{BP\text{-}PP}(\Pi) \das \max\{ \PP(\Pi_{a}) \mid a \in A, \alpha(a) > 0 \}$.

The \emph{(worst-case) $\epsilon$-error BP-PP communication complexity}, $\mrm{BP\text{-}PP}_{\epsilon}(f)$, of a Boolean function $f$
is defined as the minimum worst-case communication cost
of a randomized PP-protocol computing $f$ with two-sided $\epsilon$-error. If $\epsilon$ is not mentioned, we assume $\epsilon = 1/3$.

Probability amplification is possible for randomzied PP-protocols. As a prerequisite for a proof of this, we need the following
Chernoff-like result, which can be found in \cite[\ABp 70, Lemma 2.14]{books/bh/KSTGraphIso}.
 
\begin{fact}\label{Fact:ProbabilityAmplification}
Let $E$ be an event that occurs with probability $\frac{1}{2} + \epsilon$,
$0 < \epsilon \leq \frac{1}{2}$. Then $E$ occurs within $t$ independent trials ($t$ odd)
at least $t/2$ times with probability at least $1 - \frac{1}{2}\cdot \left(1 - 4\cdot \epsilon^{2}\right)^{t/2}$.
\end{fact}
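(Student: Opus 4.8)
The plan is to prove the claim by an elementary estimate on the lower tail of a binomial distribution, with no appeal to moment generating functions. Write $p \das \tfrac12 + \epsilon$ for the success probability of a single trial and $q \das \tfrac12 - \epsilon$ for the failure probability, and let $X$ be the number of trials (out of $t$) in which $E$ occurs, so $X$ is binomially distributed with parameters $t$ and $p$. Since $t$ is odd, the event ``$E$ occurs at least $t/2$ times'' is precisely $X \geq (t+1)/2$, whose complement is $X \leq (t-1)/2$; hence it suffices to show
\begin{equation*}
\Pr\!\left[X \leq (t-1)/2\right] \leq \tfrac12\,(1 - 4\epsilon^{2})^{t/2}
\enspace.
\end{equation*}

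The first key step is the identity $1 - 4\epsilon^{2} = (p+q)^{2} - (p-q)^{2} = 4pq$, so the target bound is $\tfrac12(4pq)^{t/2} = 2^{t-1}(pq)^{t/2}$. The second key step is a crude per-term estimate: for every $k$ with $0 \leq k \leq (t-1)/2$ we have $k - t/2 < 0$ and $p/q \geq 1$, hence $(p/q)^{k-t/2} \leq 1$, i.e.\ $p^{k}q^{t-k} \leq (pq)^{t/2}$. Summing this over $k = 0,\ldots,(t-1)/2$ gives
\begin{equation*}
\Pr\!\left[X \leq (t-1)/2\right] = \sum_{k=0}^{(t-1)/2}\binom{t}{k}p^{k}q^{t-k} \leq (pq)^{t/2}\sum_{k=0}^{(t-1)/2}\binom{t}{k}
\enspace.
\end{equation*}
The final step again uses that $t$ is odd: by the symmetry $\binom{t}{k} = \binom{t}{t-k}$ the partial sum $\sum_{k=0}^{(t-1)/2}\binom{t}{k}$ is exactly half of $\sum_{k=0}^{t}\binom{t}{k} = 2^{t}$, hence equals $2^{t-1}$. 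Combining the three steps yields $\Pr[X \leq (t-1)/2] \leq 2^{t-1}(pq)^{t/2} = \tfrac12(1-4\epsilon^{2})^{t/2}$, which is exactly the claim.

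There is no genuine obstacle here; the only points needing (minimal) care are the rewriting $1 - 4\epsilon^{2} = 4pq$, the \emph{direction} of the per-term inequality $p^{k}q^{t-k} \leq (pq)^{t/2}$ — this is where $p \geq q$, i.e.\ $\epsilon > 0$, enters, together with $k < t/2$ — and the oddness of $t$, used both to identify ``at least $t/2$ times'' with $X \geq (t+1)/2$ and to split the binomial sum into two equal halves. One could instead run the textbook Chernoff argument (apply Markov's inequality to $(q/p)^{X}$ and optimise the base), but it delivers the same constant with strictly more work, so the elementary tail bound above is preferable.
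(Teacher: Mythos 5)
The paper does not prove this Fact itself; it simply cites it as Lemma~2.14, p.~70 of \cite{books/bh/KSTGraphIso}, so there is no in-paper proof to compare against. Your elementary binomial-tail argument is correct and is the standard proof of this exact bound (to my knowledge it is essentially the argument in the cited reference); the only wrinkle is that at $\epsilon = \tfrac{1}{2}$ you divide by $q = 0$, but there the per-term inequality $p^{k}q^{t-k} \leq (pq)^{t/2}$ still holds trivially as $0 \leq 0$ for every $k < t$, so the proof is unaffected.
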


\begin{theorem}[Probability amplification]\label{Thm:PA}
For every Boolean function $f$ and every $\epsilon \in [0,1/2[$ we have
\begin{equation*}
\mrm{BP\text{-}PP}_{\frac{1}{2}\cdot \left(1 - 4\cdot \epsilon^{2}\right)^{k/2}}(f)
\leq
\mcal{O}\left(k(\log k)\mrm{BP\text{-}PP}^{2}_{\epsilon}(f)\right)
\enspace.
\end{equation*}
\end{theorem}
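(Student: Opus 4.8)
The plan is to run the standard amplification-by-majority argument, the one subtlety being that taking a majority of PP-protocols must be carried out \emph{within} $\PPcc$, which is exactly what the Beigel--Reingold--Spielman machinery of Lemmas~\ref{PP:Lemma1},~\ref{PP:Lemma2} and Propositions~\ref{Prop:DegM},~\ref{Prop:Majority} provides. Set $c \das \mrm{BP\text{-}PP}_{\epsilon}(f)$ and fix a randomized PP-protocol $\Pi$ of cost $c$ computing $f$ with two-sided $\epsilon$-error: a distribution $\alpha\colon A \to [0,1]$ over PP-protocols $\{\Pi_{a} \mid a \in A\}$ with $\PP(\Pi_{a}) \leq c$ for every $a$ and $\Pr_{\alpha}[f^{\PP}_{\Pi_{\alpha}}(x,y) \neq f(x,y)] \leq \epsilon$ for every $(x,y)$. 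By the standing no-tie normalization each $\Pi_{a}$ satisfies $|\gap_{\Pi_{a}}(x,y)| \geq 1$, and $\PP(\Pi_{a}) \leq c$ forces $\Pi_{a}$ to use at most $2^{c}$ guesses, hence $1 \leq |\gap_{\Pi_{a}}(x,y)| \leq 2^{c}$ throughout; I would also assume $k$ is odd, replacing it by $k+1$ otherwise (this only decreases the target error $\tfrac{1}{2}(1-4\epsilon^{2})^{k/2}$ and does not affect $\mcal{O}(k(\log k)c^{2})$).

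Next I would define the amplified protocol $\Pi'$ as the product distribution $\alpha^{\otimes k}$ over tuples $\vec{a}=(a_{1},\ldots,a_{k}) \in A^{k}$, sending $\vec{a}$ to the PP-protocol $\Pi^{*}_{\vec{a}}$ obtained by applying Proposition~\ref{Prop:Majority} to $\Pi_{a_{1}},\ldots,\Pi_{a_{k}}$ (all of cost $\leq c$). The key observation is that the cost bound $\PP(\Pi^{*}_{\vec{a}}) = \mcal{O}(k(\log k)(\log c)+c^{2}(\log k))$ delivered by Proposition~\ref{Prop:Majority} depends only on $k$ and $c$, hence holds uniformly over $\vec{a}$; since $\log c \leq c^{2}$ this is $\mcal{O}(k(\log k)c^{2}) = \mcal{O}(k(\log k)\,\mrm{BP\text{-}PP}^{2}_{\epsilon}(f))$, so $\Pi'$ is a bona fide randomized PP-protocol of the asserted cost.

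It remains to bound the error of $\Pi'$. Fix $(x,y)$. By the defining property of $\Pi^{*}_{\vec{a}}$, $\gap_{\Pi^{*}_{\vec{a}}}(x,y)$ has the same sign as $T^{(k)}_{c}\!\big(\gap_{\Pi_{a_{1}}}(x,y),\ldots,\gap_{\Pi_{a_{k}}}(x,y)\big)$, and since every argument lies in $[-2^{c},-1]\cup[1,2^{c}]$, Proposition~\ref{Prop:DegM}(3) makes this positive exactly when at least half of the $\gap_{\Pi_{a_{i}}}(x,y)$ --- equivalently at least half of the bits $f^{\PP}_{\Pi_{a_{i}}}(x,y)$ --- equal $1$. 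Thus $f^{\PP}_{\Pi^{*}_{\vec{a}}}(x,y)$ is the (tie-free, since $k$ is odd) majority vote of the $k$ outputs $f^{\PP}_{\Pi_{a_{1}}}(x,y),\ldots,f^{\PP}_{\Pi_{a_{k}}}(x,y)$, and therefore equals $f(x,y)$ whenever more than half of the $k$ independent samples decide $f(x,y)$ correctly. Each sample is correct with probability at least $1-\epsilon$, independently over $\vec{a}\sim\alpha^{\otimes k}$, so Fact~\ref{Fact:ProbabilityAmplification}, applied with $t \das k$ to the event ``$\Pi_{a_{i}}$ is correct on $(x,y)$'', bounds $\Pr_{\vec{a}}[f^{\PP}_{\Pi^{*}_{\vec{a}}}(x,y) \neq f(x,y)]$ by $\tfrac{1}{2}(1-4\epsilon^{2})^{k/2}$. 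Since $(x,y)$ was arbitrary, $\Pi'$ witnesses the claimed inequality.

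The genuinely load-bearing step is the second paragraph: $\PPcc$ is not visibly closed under majority, and the purpose of all of Lemmas~\ref{PP:Lemma1},~\ref{PP:Lemma2} and Propositions~\ref{Prop:DegM},~\ref{Prop:Majority} is to perform this closure at communication cost polynomial in $c$ and --- crucially for the randomized setting --- \emph{independent} of the coin tosses $\vec{a}$, so that the per-tuple majority protocols can be glued into a single randomized protocol. Everything else is the routine Chernoff bookkeeping of the last paragraph, where I do not anticipate any obstacle.
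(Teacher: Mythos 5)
Your proof takes exactly the paper's (one-line) route: apply Proposition~\ref{Prop:Majority} to the $k$ sampled PP-protocols to build a majority PP-protocol per coin-toss tuple, then apply Fact~\ref{Fact:ProbabilityAmplification} to bound the error. Your expansion is careful and correct on all the load-bearing points the paper leaves implicit: that the no-tie normalization and the cost bound $\PP(\Pi_a)\leq c$ force $1\leq|\gap_{\Pi_a}|\leq 2^{c}$, so that Proposition~\ref{Prop:DegM}(3) applies with $m=c$; that the cost bound $\mcal{O}(k(\log k)(\log c)+c^{2}(\log k))$ from Proposition~\ref{Prop:Majority} is uniform over the coin tosses and dominated by $\mcal{O}(k(\log k)c^{2})$; and the parity adjustment so the majority is tie-free, which the Fact's ``$t$ odd'' hypothesis requires and the theorem statement quietly omits.

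One point you reproduce rather than catch: the parameter $\epsilon$ in Fact~\ref{Fact:ProbabilityAmplification} is the \emph{advantage} above $1/2$, whereas the theorem's $\epsilon$ is the \emph{error} of the original protocol (per the definition of $\mrm{BP\text{-}PP}_\epsilon$). Your penultimate sentence correctly says each sample is right with probability $\geq 1-\epsilon$, i.e.\ advantage $\tfrac{1}{2}-\epsilon$; plugging that into the Fact yields the error bound $\tfrac{1}{2}\bigl(1-4(\tfrac{1}{2}-\epsilon)^{2}\bigr)^{k/2}=\tfrac{1}{2}\bigl(4\epsilon(1-\epsilon)\bigr)^{k/2}$, not the $\tfrac{1}{2}(1-4\epsilon^{2})^{k/2}$ you (and the theorem) write. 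The two coincide only at $\epsilon=1/4$. This is evidently a slip in the paper's statement of the theorem rather than in its (or your) underlying argument, but a blind proof should have flagged that the bound actually established is the former; as stated, your last step does not follow from your premises.
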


\begin{proof}
Apply \ref{Prop:Majority} and \ref{Fact:ProbabilityAmplification}.
\end{proof}
 
We recall the definition of the BP-operator $\opBP\mcal{C}$ for communication complexity classes $\mcal{C}$ given in
\cite{DBLP:journals/eccc/Wunderlich10}.
 
A language $L$ is in $\opBP\mcal{C}$ if there exist a language $L' \in \mcal{C}$ and a polynomially bounded function  $q$
such that for all $n$-bit input pairs $(x,y)$ we have
\begin{align*}
(x,y) \notin L &\implies \left|\left\{ r \in \B^{\lceil q(\log n)\rceil} \Mid
(\langle x,r\rangle,\langle y,r\rangle) \in L' \right\}\right| \big/
2^{\lceil q(\log n)\rceil } \leq 1 / 3
\enspace,\\
(x,y) \in L &\implies
\left|\left\{ r \in \B^{\lceil q(\log n)\rceil} \Mid
(\langle x,r\rangle,\langle y,r\rangle) \in L' \right\}\right| \big/
2^{\lceil q(\log n)\rceil} \geq 2 / 3
\enspace.
\end{align*}

\begin{claim}\label{Claim:1}
\begin{equation*}
\opBP\PPcc = \{ L \mid \mrm{BP\text{-}PP}(L_{n}) = \mrm{polylog}(n) \}
\enspace.
\end{equation*}
\end{claim}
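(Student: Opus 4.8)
The plan is to prove the two inclusions separately, both by fairly direct manipulation of the definitions. For the inclusion $\opBP\PPcc \subseteq \{ L \mid \mrm{BP\text{-}PP}(L_{n}) = \polylog(n)\}$, I would start from a language $L \in \opBP\PPcc$ witnessed by $L' \in \PPcc$ and a polynomially bounded $q$. Since $L' \in \PPcc$, there is an efficient family of guess protocols computing $L'_{m}$ in PP acceptance mode with cost $\polylog(m)$; plugging in $m = n + \lceil q(\log n)\rceil = \polylog(n)$-sized inputs of the form $(\langle x,r\rangle, \langle y,r\rangle)$, each fixed $r$ yields a PP-protocol on the original input $(x,y)$ of cost $\polylog(n)$ (the interleaving $\langle\cdot,r\rangle$ only adds a known string, costing nothing in communication). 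Taking $\alpha$ to be the uniform distribution over $r \in \B^{\lceil q(\log n)\rceil}$ gives a randomized PP-protocol; the $\opBP$-promise that a $2/3$ vs.\ $1/3$ fraction of the $r$'s land in $L'$ says exactly that this randomized protocol computes $L_{n}$ with two-sided error $1/3$, so $\mrm{BP\text{-}PP}(L_{n}) = \polylog(n)$.

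For the converse, suppose $\mrm{BP\text{-}PP}(L_{n}) = \polylog(n)$, so for each $n$ there is a randomized PP-protocol $\alpha\colon A \to [0,1]$ over PP-protocols of cost $\polylog(n)$ computing $L_{n}$ with error $1/3$. The first step is to pass to a uniform distribution over a polylogarithmically long list: by Theorem~\ref{Thm:PA} (probability amplification) I can first drive the error below, say, $2^{-n}$ at a cost still $\polylog(n)$; then a standard averaging/counting argument (the usual derandomization-of-the-sampling-space trick, or simply truncating the probabilities to a fine enough grid and repeating protocols according to their weights) replaces $\alpha$ by the uniform distribution over a sequence of $2^{\lceil q(\log n)\rceil}$ PP-protocols $\Pi^{(n)}_{1},\ldots,\Pi^{(n)}_{2^{\lceil q(\log n)\rceil}}$, for some polynomially bounded $q$, still computing $L_{n}$ with error at most $1/3$ (indeed much better). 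Now define $L'$ on inputs $(\langle x,r\rangle,\langle y,r\rangle)$, where $r$ is read as an index into the list, to hold iff $f^{\PP}_{\Pi^{(n)}_{r}}(x,y) = 1$. One checks $L' \in \PPcc$ — a single guess protocol first uses the bits of $r$ to select the appropriate $\Pi^{(n)}_{r}$ and then runs it, which costs $\lceil q(\log n)\rceil + \polylog(n) = \polylog$, and this selection is easily arranged in PP acceptance mode because the index bits are part of the input — and the error bound on the randomized protocol is precisely the $\opBP$-promise for $L$ relative to $L'$ and $q$. Hence $L \in \opBP\PPcc$.

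I would phrase the two directions as \fbox{$\supseteq$} and \fbox{$\subseteq$} mirroring the style of Proposition~\ref{Prop:PP:Equivalence}. The routine-but-slightly-fiddly points to handle carefully are: (i) that interleaving a public random string $r$ into the inputs via $\langle\cdot,r\rangle$ does not change communication cost (both players know $r$ in the $\opBP$ model, so this is immediate), and (ii) the bookkeeping that a guess protocol can perform the index-selection on $r$ internally while staying in PP acceptance mode — this is where one uses that deterministic protocols are closed under such prefix-driven case distinctions, padding all branches to a common length and, if necessary, applying the doubling-plus-always-rejecting trick from the discussion after Proposition~\ref{Prop:PP:Equivalence} to keep the gap nonzero.

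The main obstacle is step (i) of the converse: turning an \emph{arbitrary} finite distribution $\alpha$ into a \emph{uniform} distribution over a $2^{\polylog}$-size list without blowing up the error or the cost. Probability amplification (Theorem~\ref{Thm:PA}) is the key tool that makes this work — amplifying to exponentially small error first gives enough slack that the subsequent discretization of $\alpha$ (rounding each $\alpha(a)$ to a multiple of $2^{-\lceil q(\log n)\rceil}$ and duplicating protocols accordingly) perturbs the acceptance probability by at most, say, $1/6$, comfortably inside the $1/3$ budget. Everything else is a matter of chasing definitions.
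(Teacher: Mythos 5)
Your argument is essentially the same as the paper's: the $\subseteq$-direction is the trivial one (hardcode each random string $r$ into the inputs, take the uniform distribution over $r$), and the $\supseteq$-direction combines a reduction of the random-bit complexity to $\polylog(n)$ with the probability-amplification Theorem~\ref{Thm:PA}. The paper applies Newman's theorem first (paying a small error increase) and then amplifies back below $1/3$; you amplify first to exponentially small error and then reduce the random bits, absorbing the resulting error increase in the slack. Both orders work, so this is the same proof up to a harmless reordering.

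One caveat worth flagging: your parenthetical alternative of ``truncating the probabilities to a fine enough grid and repeating protocols according to their weights'' does \emph{not} by itself give the needed reduction when the support of $\alpha$ is large. If $|A|$ is, say, doubly exponential in $n$, then almost all weights $\alpha(a)$ are below any $2^{-\polylog(n)}$ grid resolution and would round to zero, so the rounded distribution bears no useful relation to $\alpha$. The tool that actually works is the \emph{sampling} argument behind Newman's theorem (your ``derandomization-of-the-sampling-space trick,'' and what the paper cites explicitly as Theorem~3.14 in Kushilevitz--Nisan): draw $t = 2^{O(\polylog n)}$ i.i.d.\ samples from $\alpha$ and show by a Chernoff/union bound over the $2^{2n}$ inputs that some fixed multiset of $t$ protocols, taken with the uniform distribution, has error within $\delta$ of the original on every input. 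Once that correction is made, the rest of your bookkeeping (reading $r$ as an index, padding branches, keeping a nonzero gap via the doubling-plus-always-rejecting device) is exactly right.
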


\begin{proof}
The proof of the $\subseteq$-inclusion is trivial. The other inclusion is an application of a result of Newman, see \ABeg
\cite[Theorem 3.14]{books/KushilevitzNisan}, which allows us to replace the arbitrarily large set $A$ and the distribution $\alpha$ of a
randomized PP-protocol by a uniform distribution on $\polylog(n)$ bits. We have to pay for this by increasing the error slightly,
but this is not a real problem any longer, because by \ref{Thm:PA} probability amplification is possible to reduce the error to less than one-third again. 
\end{proof}

Let $\Lambda\colon \mbb{M}_{n}(\{0,1\}) \to \mbb{R}$ be a mapping, assigning to each Boolean matrix $f$ a real number $\Lambda(f)$.
In the sequel, it will be the communication-complexity measure $\PP$. Let $\epsilon \in [0,1/2[$.
The \emph{BP-operator} applied on $\Lambda$ is defined as
\begin{equation*}
(\BP_{\epsilon}\cdot \Lambda)(f) \das \max_{\mu}\min_{\tilde{f}\colon \mu(f \neq \tilde{f}) \leq \epsilon}\Lambda(\tilde{f}) 
\enspace,
\end{equation*} 
where $\mu$ denotes a probability distribution on the matrix entries of $f$. Again, if $\epsilon$ is not mentioned, we assume
that $\epsilon = 1/3$.

We remark that the BP-operator may be considered as a perturbation operator that tests how much the measure $\Lambda$ deviates from
the value $\Lambda(f)$ when $f$ is altered by an $\epsilon$-fraction of its entries.
 
\begin{claim}\label{Claim:2}
For every Boolean function $f$ and every $\epsilon \in [0,1/2[$ we have
\begin{equation*}
\mrm{BP\text{-}PP}_{\epsilon}(f) = (\BP_{\epsilon}\cdot\PP)(f)
\enspace.
\end{equation*}
\end{claim}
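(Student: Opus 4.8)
The plan is to prove the identity by a double inequality, unwinding the definitions on both sides and exhibiting the obvious correspondence between randomized PP-protocols and perturbations of $f$ realized by deterministic PP-protocols.

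\medskip

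\noindent\textbf{From $(\BP_{\epsilon}\cdot\PP)(f)$ to $\mrm{BP\text{-}PP}_{\epsilon}(f)$.} Fix a probability distribution $\mu$ on the entries of $f$ achieving (or approaching) the outer maximum, and let $\tilde{f}$ be a closest Boolean matrix with $\mu(f \neq \tilde{f}) \leq \epsilon$ and $\PP(\tilde{f}) = (\BP_{\epsilon}\cdot\PP)(f)$. The point is that the max--min structure means this value is a lower bound for \emph{every} randomized PP-protocol $\Pi = (\alpha, \{\Pi_a\})$ computing $f$ with $\epsilon$-error: given such a $\Pi$, I would view the error probabilities as a matrix and invoke a minimax/averaging argument (essentially LP duality, or simply the probabilistic argument that if every entry is wrong with $\alpha$-probability $\leq\epsilon$, then for the distribution $\mu$ there is a fixed protocol $\Pi_a$ in the support wrong only on a $\mu$-mass $\leq\epsilon$ of entries). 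That fixed $\Pi_a$ computes some Boolean $\tilde f$ with $\mu(f\neq\tilde f)\le\epsilon$, and $\PP(\Pi_a)\le\mrm{BP\text{-}PP}_\epsilon(\Pi)$. Taking $\Pi$ to be optimal gives $(\BP_\epsilon\cdot\PP)(f)\le\mrm{BP\text{-}PP}_\epsilon(f)$.

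\medskip

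\noindent\textbf{From $\mrm{BP\text{-}PP}_{\epsilon}(f)$ to $(\BP_{\epsilon}\cdot\PP)(f)$.} For the reverse inequality, take $\mu$ arbitrary and let $\tilde f$ be the minimizing Boolean matrix for that $\mu$, with $\PP(\tilde f)\le (\BP_\epsilon\cdot\PP)(f)$. A single deterministic PP-protocol computing $\tilde f$ is a (degenerate) randomized PP-protocol, but it only has error $\le\epsilon$ under $\mu$, not in the worst case, so one cannot directly conclude. The right move is again an averaging/minimax exchange: consider the zero-sum game where the perturbation player picks $\tilde f$ (equivalently a PP-protocol) and the adversary picks $\mu$; by von Neumann's minimax theorem the value $\max_\mu\min_{\tilde f}\PP(\tilde f)$ equals $\min_{\text{mixtures of }\tilde f}\max_{(x,y)}\Pr[\text{error at }(x,y)]$-bounded cost, which is exactly a randomized PP-protocol with worst-case $\epsilon$-error. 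Care is needed because $\PP(\tilde f)$ is integer-valued and the "strategy spaces" are finite once we bound the cost, so the minimax theorem applies on each cost level; this yields $\mrm{BP\text{-}PP}_\epsilon(f)\le(\BP_\epsilon\cdot\PP)(f)$.

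\medskip

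\noindent\textbf{Main obstacle.} The only real subtlety is the direction in which a single low-cost protocol must be extracted from a worst-case guarantee, and conversely a worst-case protocol from a family of distribution-dependent ones: in both places the naive argument is off by the quantifier order, and one must invoke the minimax theorem (restricted to a fixed cost budget, where both players have finite strategy sets, so no topological hypotheses are needed). Everything else — that a deterministic PP-protocol computes a Boolean matrix, that $\PP(\Pi_a)\le\mrm{BP\text{-}PP}(\Pi)$ for $a$ in the support, that the error condition $\Pr_\alpha[f^{\PP}_{\Pi_\alpha}\neq f]\le\epsilon$ matches $\mu(f\neq\tilde f)\le\epsilon$ after the exchange — is bookkeeping. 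I would organize the write-up as the two inequalities above, isolating the minimax step as a small lemma so the quantifier juggling is transparent.
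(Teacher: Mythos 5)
Your proof is correct and follows essentially the same route as the paper, which simply invokes Yao's minimax principle (citing Kushilevitz--Nisan, Theorem 3.20) with deterministic PP-protocols playing the role of deterministic protocols. You have unwound that citation into its two directions --- the averaging argument and the von Neumann step with a fixed cost budget to keep the strategy spaces finite --- which is exactly the content of the cited theorem.
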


\begin{proof}
This is just an application of Yao's Minimax-principle, see \ABeg \cite[Theorem 3.20]{books/KushilevitzNisan}.
Here, the PP-protocols take the role of the deterministic protocols in the original proof.
\end{proof}

Combining \bare\ref{Claim:1} and \bare\ref{Claim:2}, we obtain
 
\begin{proposition}
For every language $L$ we have
\begin{equation*}
L \in \opBP\PPcc \iff (\BP_{\epsilon}\cdot\PP)(L_{n}) = polylog(n)
\enspace,
\end{equation*}
where $\epsilon \in [0,1/2[$ is an arbitrary but fixed constant.
\end{proposition}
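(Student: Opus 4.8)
The plan is to compose \ref{Claim:1} and \ref{Claim:2}, using \ref{Thm:PA} to absorb the freedom in the constant $\epsilon$. First I would instantiate \ref{Claim:2} at $\epsilon = 1/3$: it gives $\mrm{BP\text{-}PP}(L_{n}) = (\BP\cdot\PP)(L_{n})$ for every $n$, so the left-hand side is $\polylog(n)$ precisely when the right-hand side is. Combining this with \ref{Claim:1}, which states $\opBP\PPcc = \{L \mid \mrm{BP\text{-}PP}(L_{n}) = \polylog(n)\}$, immediately yields
\begin{equation*}
L \in \opBP\PPcc \iff (\BP\cdot\PP)(L_{n}) = \polylog(n)
\enspace,
\end{equation*}
i.e., the assertion for the specific value $\epsilon = 1/3$.

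It then remains to see that the predicate ``$(\BP_{\epsilon}\cdot\PP)(L_{n}) = \polylog(n)$'' is independent of the fixed constant $\epsilon \in [0,1/2[$; by \ref{Claim:2} it is enough to show the same for $\mrm{BP\text{-}PP}_{\epsilon}$. In one direction I would use plain monotonicity: $\mrm{BP\text{-}PP}_{\epsilon'}(f) \leq \mrm{BP\text{-}PP}_{\epsilon}(f)$ whenever $\epsilon' \geq \epsilon$, since a protocol with error at most $\epsilon$ has error at most $\epsilon'$. In the other direction I would invoke \ref{Thm:PA}: given a target error $\epsilon' < \epsilon$, pick a constant $k$ with $\frac{1}{2}(1-4\epsilon^{2})^{k/2} \leq \epsilon'$ and conclude
\begin{equation*}
\mrm{BP\text{-}PP}_{\epsilon'}(f) \leq \mrm{BP\text{-}PP}_{\frac{1}{2}(1-4\epsilon^{2})^{k/2}}(f) \leq \mcal{O}\!\left(k(\log k)\,\mrm{BP\text{-}PP}^{2}_{\epsilon}(f)\right)
\enspace,
\end{equation*}
whose right-hand side is $\polylog(n)$ whenever $\mrm{BP\text{-}PP}_{\epsilon}(L_{n})$ is, because $k$ is constant and the square of a $\polylog$ function is still $\polylog$. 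Applying these two observations between an arbitrary constant $\epsilon$ and $1/3$ (in both directions) shows that the two predicates agree, which finishes the proof.

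There is essentially no obstacle: all the content has already been packaged into \ref{Claim:1}, \ref{Claim:2}, and \ref{Thm:PA}. The one point to be careful about is that $(\BP_{\epsilon}\cdot\PP)$ and $(\BP_{1/3}\cdot\PP)$ are \emph{not} equal as functions — they agree only up to the quadratic loss of \ref{Thm:PA}, which is invisible to the ``$=\polylog$'' predicate but not pointwise — so the constant-independence argument must be carried out at the level of that predicate, and it is essential that $\epsilon$ be a genuine constant, so that the number $k$ of amplification rounds, and hence the loss incurred, remains bounded.
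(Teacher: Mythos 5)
Your proof matches the paper's approach: the paper itself reduces to ``Combining \ref{Claim:1} and \ref{Claim:2}, we obtain\ldots'' with no further argument, and you correctly identify that this combination directly yields the claim only at $\epsilon = 1/3$. You then supply the missing $\epsilon$-independence step via monotonicity of $\mrm{BP\text{-}PP}_{\epsilon}$ in $\epsilon$ and via \ref{Thm:PA}, noting correctly that $k$ stays constant and the squaring loss is invisible to the $\polylog$ predicate --- this is exactly the right way to finish, and it is more careful than the paper, which leaves this step implicit. One small caveat, inherited from the paper's formulation rather than a flaw in your reasoning: the amplification direction needs a strictly positive target error, since $\tfrac{1}{2}(1-4\epsilon^{2})^{k/2}$ cannot be driven to $0$ by any finite $k$. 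So the equivalence as argued holds for $\epsilon \in \left]0,1/2\right[$; at $\epsilon = 0$ both $(\BP_{0}\cdot\PP)(L_{n})$ and $\mrm{BP\text{-}PP}_{0}(L_{n})$ collapse to $\PP(L_{n})$, and the proposition would then assert $\opBP\PPcc = \PPcc$, which is surely not what is intended.
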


A result of Klauck gives a characterization of $\PP$-complexity via discrepancy, $\disc'(B)$, defined
for Boolean matrices $B$.

\begin{fact}[\cite{DBLP:conf/focs/Klauck01}, Fact 6 in \cite{DBLP:conf/coco/Klauck03}]\label{Fact:Klauck}
For every language $L$ we have
\begin{equation*}
\log \frac{1}{\disc'(L_{n})} \leq \PP(L_{n}) \leq \mcal{O}\left(\log\frac{1}{\disc'(L_{n})} + \log n\right)
\enspace. 
\end{equation*}
\end{fact}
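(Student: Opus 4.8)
The plan is to establish the two inequalities separately, each time passing between a guess protocol for $L_n$ and a representation of the $\{-1,+1\}$-valued communication matrix $F$ of $L_n$ (normalized so that $F(x,y)=1\iff(x,y)\in L_n$) as a signed combination of combinatorial rectangles; throughout, $\mathbf{1}_R$ denotes the indicator of a rectangle $R$, and I use the standard fact that a deterministic protocol of cost $d$ partitions $\XtY$ into at most $2^d$ rectangles on each of which the computed function is constant. For $\log\frac{1}{\disc'(L_n)}\le\PP(L_n)$, I would fix an optimal guess protocol $\Pi=(\Pi_1,\dots,\Pi_l)$ computing $L_n$ in PP mode, with $c:=\PP(L_n)=\lceil\log l\rceil+\max_i\mrm{D}(\Pi_i)$ and, without loss of generality, $\gap_\Pi$ nowhere zero, so that $F(x,y)\,\gap_\Pi(x,y)=|\gap_\Pi(x,y)|\ge 1$ pointwise. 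Writing $\gap_\Pi=\sum_{i\in[l]}(2f_{\Pi_i}-1)$ and expanding each $\pm1$-valued summand over the $\le 2^{\mrm{D}(\Pi_i)}$ leaf-rectangles of $\Pi_i$, I obtain, for every distribution $\mu$ on $\XtY$, that $1\le\sum_{x,y}\mu(x,y)F(x,y)\gap_\Pi(x,y)$; hence some $i$, and then, by the triangle inequality over the leaves of $\Pi_i$, some rectangle $R$ with $\bigl|\sum_{(x,y)\in R}\mu(x,y)F(x,y)\bigr|\ge\frac{1}{l\,2^{\mrm{D}(\Pi_i)}}\ge 2^{-c}$. Since $\mu$ is arbitrary, $\disc'(L_n)\ge 2^{-c}$.

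For the upper bound I would run this in reverse, starting from $\delta:=\disc'(L_n)$. Von Neumann's minimax theorem for the finite matrix game in which one side picks a distribution $\mu$ on the entries and the other a signed rectangle indicator $\pm\mathbf{1}_R$, with payoff $\sum_{x,y}\mu(x,y)F(x,y)\psi(x,y)$, yields a distribution $\nu$ over signed rectangles such that $\phi:=\mathbb{E}_{\psi\sim\nu}[\psi]$ satisfies $F(x,y)\phi(x,y)\ge\delta$ (so $F=\mrm{sign}(\phi)$ and $|\phi|\ge\delta$) for every entry, with $\|\phi\|_\infty\le 1$. A Hoeffding bound together with a union bound over the $\le 2^{2n}$ entries then shows that, drawing $\psi_1,\dots,\psi_t$ i.i.d.\ from $\nu$ with $t=\mcal{O}(n/\delta^2)$, some fixed outcome satisfies $\mrm{sign}\bigl(\sum_s\psi_s(x,y)\bigr)=F(x,y)$ and $\sum_s\psi_s(x,y)\ne 0$ for all $(x,y)$. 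Writing $\psi_s=\epsilon_s\mathbf{1}_{R_s}$ with $\epsilon_s\in\{-1,+1\}$, each $R_s=S_s\times T_s$ is decided by a $2$-bit deterministic protocol; combining it with a trivial always-accepting protocol (and passing to the complement when $\epsilon_s=-1$) produces a $2$-guess protocol $\Pi^{(s)}$ with $\gap_{\Pi^{(s)}}=2\epsilon_s\mathbf{1}_{R_s}=2\psi_s$. Then $\Pi:=\Pi^{(1)}+\dots+\Pi^{(t)}$ has $\gap_\Pi=2\sum_s\psi_s$, whose sign is $F$ everywhere, so $\Pi$ computes $L_n$ in PP mode with $\mcal{O}(t)$ guesses and component cost $\mcal{O}(1)$, giving $\PP(L_n)\le\lceil\log\mcal{O}(t)\rceil+\mcal{O}(1)=\mcal{O}\bigl(\log\tfrac{1}{\disc'(L_n)}+\log n\bigr)$.

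The main obstacle is not the two inequalities themselves but making the minimax step agree with the precise definition of $\disc'$ used in the cited work: one must phrase the game so that its value is literally $\disc'(L_n)$, which requires getting the normalization right -- all distributions versus product distributions, and whether the $\pm$ sign and any $1/(|X|\,|Y|)$ factor are already built into $\disc'$. A mismatch there perturbs both bounds, but only by an additive $\mcal{O}(\log n)$, which is exactly the slack already present in the statement; it is likewise the discretization (Hoeffding plus union bound) that forces the $\log n$ term in the upper bound, and the little guess-protocol gadget realizing $\pm 2\mathbf{1}_R$ as a $\gap$-function is where the sum-and-complement algebra of the previous subsection is used.
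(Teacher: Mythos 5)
The paper does not prove this statement; it is stated as a Fact and attributed to Klauck's work. Your argument is correct and is essentially the standard proof of Klauck's characterization of $\PPcc$ by discrepancy: the lower bound is the classical discrepancy argument applied to the leaf rectangles of a guess protocol whose $\gap$ is nowhere zero (the always-rejecting padding costs only $\mcal{O}(1)$, which you implicitly absorb), and the upper bound uses von Neumann's minimax theorem to obtain a distribution $\nu$ over signed rectangles whose expectation sign-represents the communication matrix with margin $\delta=\disc'(L_n)$, followed by Hoeffding plus a union bound over the $2^{2n}$ entries to sparsify $\nu$ to $t=\mcal{O}(n/\delta^2)$ signed rectangles, each realized as a two-guess protocol with $\gap=\pm 2\mathbf{1}_R$. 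You are also right that the only genuinely delicate point is matching the exact normalization of $\disc'$ used in the cited source; any reasonable variant changes the bounds only by the additive $\mcal{O}(\log n)$ already present in the statement.
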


A main result of \cite{DBLP:journals/cpc/LinialS09} is a tight relationship between margin complexity, $\mc(A)$,
and discrepancy, $\disc(A)$, defined for sign matrices $A$.

\begin{fact}{\cite[Theorem 3.4]{DBLP:journals/cpc/LinialS09}}\label{Fact:LS}
For a sign matrix $A$ the ratio between discrepany $\disc(A)^{-1}$ and margin complexity $\mc(A)$
is a factor of at most eight.
\end{fact}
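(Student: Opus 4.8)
Here is how I would approach the statement (writing $\mc$, $\disc$ for the margin complexity and discrepancy of a sign matrix as in the paper).

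The plan is to establish the two-sided estimate $\mc(A)\le\disc(A)^{-1}<8\,\mc(A)$, from which the claimed factor of at most eight is immediate. The central tool is the classical identification of margin complexity with the $\gamma_{2}$-factorization norm. For a real matrix $M$ put $\gamma_{2}(M)\das\min\{\,r(X)\,c(Y):XY=M\,\}$, where $r(X)$ is the largest Euclidean norm of a row of $X$ and $c(Y)$ the largest Euclidean norm of a column of $Y$, and let $\gamma_{2}^{*}(M)\das\max\{\,\langle M,N\rangle:\gamma_{2}(N)\le 1\,\}$ be the dual norm. Then, for a sign matrix $A$, one has $\mc(A)=\gamma_{2}^{\infty}(A)\das\min\{\,\gamma_{2}(B):A_{ij}B_{ij}\ge 1\text{ for all }i,j\,\}$: a margin-$\mu$ representation by unit vectors $x_{i},y_{j}$ gives $B_{ij}\das\langle x_{i},y_{j}\rangle/\mu$ with $A_{ij}B_{ij}\ge 1$ and $\gamma_{2}(B)\le 1/\mu$, and the converse is the same computation read backwards (after rescaling the factors to equal row/column norm). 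Throughout, $m(A)=1/\mc(A)$ denotes the margin of $A$.

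First I would record the duality identity $m(A)=\min_{\mu}\gamma_{2}^{*}(\mu\circ A)$, the minimum taken over probability distributions $\mu$ on the matrix entries. The admissible inner-product matrices $\Sigma=(\langle x_{i},y_{j}\rangle)$ are exactly those with $\gamma_{2}(\Sigma)\le 1$, and allowing the representing vectors to have norm at most $1$ rather than exactly $1$ does not change $m(A)$; hence $m(A)=\max_{\gamma_{2}(\Sigma)\le 1}\min_{i,j}A_{ij}\Sigma_{ij}=\max_{\gamma_{2}(\Sigma)\le 1}\min_{\mu}\langle\mu\circ A,\Sigma\rangle$. Sion's minimax theorem (both domains being compact and convex, the objective bilinear) exchanges the two operators, and $\max_{\gamma_{2}(\Sigma)\le 1}\langle\mu\circ A,\Sigma\rangle=\gamma_{2}^{*}(\mu\circ A)$ by definition of the dual norm, which is the identity.

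The two bounds then follow from a single sandwich. For every distribution $\mu$, elementary manipulations with $\pm 1$-vectors give $\tfrac{1}{4}\|\mu\circ A\|_{\infty\to 1}\le\disc_{\mu}(A)\le\|\mu\circ A\|_{\infty\to 1}$ (a $\{0,1\}$-rectangle sum is a quarter of a signed sum of four $\pm 1$-bilinear evaluations, and conversely a $\pm 1$-bilinear form is a signed sum of four rectangle sums), while Grothendieck's inequality together with the trivial bound $\langle M,st^{\top}\rangle\le\gamma_{2}^{*}(M)$ for sign vectors $s,t$ gives $\tfrac{1}{K_{G}}\gamma_{2}^{*}(M)\le\|M\|_{\infty\to 1}\le\gamma_{2}^{*}(M)$, where $K_{G}<2$ is Grothendieck's constant. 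Composing, $\tfrac{1}{4K_{G}}\gamma_{2}^{*}(\mu\circ A)\le\disc_{\mu}(A)\le\gamma_{2}^{*}(\mu\circ A)$ for every $\mu$; minimising over $\mu$ and invoking the identity of the previous paragraph yields $\tfrac{1}{4K_{G}}m(A)\le\disc(A)\le m(A)$, that is, $\mc(A)\le\disc(A)^{-1}\le 4K_{G}\,\mc(A)<8\,\mc(A)$. If one prefers to bypass both Grothendieck and the minimax identity in the lower-bound direction, one may instead start from an \emph{optimal} unit-vector representation (so $A_{ij}\langle x_{i},y_{j}\rangle\ge m(A)>0$ for all $i,j$, the supremum being attained by compactness) and round with a standard Gaussian $g$ via $s_{i}\das\operatorname{sgn}\langle g,x_{i}\rangle$, $t_{j}\das\operatorname{sgn}\langle g,y_{j}\rangle$: since $\mathbb{E}[s_{i}t_{j}]=\tfrac{2}{\pi}\arcsin\langle x_{i},y_{j}\rangle$, the positivity of $A_{ij}\langle x_{i},y_{j}\rangle$ and the estimate $\arcsin\tau\ge\tau$ on $[0,1]$ give $A_{ij}\mathbb{E}[s_{i}t_{j}]\ge\tfrac{2}{\pi}m(A)$; averaging $\sum_{i,j}\mu_{ij}A_{ij}s_{i}t_{j}$ over $g$ then produces some $s,t$ with $\|\mu\circ A\|_{\infty\to 1}\ge\tfrac{2}{\pi}m(A)$, whence $\disc_{\mu}(A)\ge\tfrac{1}{2\pi}m(A)\ge\tfrac{1}{8}m(A)$ for every $\mu$.

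I expect the rounding step of the lower-bound direction to carry the real weight: it is the passage from the semidefinite quantity $\gamma_{2}^{*}(\mu\circ A)$ — equivalently, the value of the best margin representation tested against $\mu$ — down to an honest discrepancy over combinatorial rectangles, and this is where a Grothendieck-type rounding inequality enters and where the numerical constant is pinned down. The feature of sign matrices exploited in the elementary variant is the positivity $A_{ij}\langle x_{i},y_{j}\rangle\ge 0$ available in an optimal representation, which lets $\arcsin\tau\ge\tau$ stand in for the full Grothendieck inequality and still keep the constant comfortably below eight. Everything else is routine: $\mc(A)=\gamma_{2}^{\infty}(A)$ is a direct unwinding of the definitions, the minimax exchange uses only compactness and bilinearity, and the relations between $\pm 1$-bilinear forms, $\{0,1\}$-rectangle sums and rank-one sign matrices are two-line computations.
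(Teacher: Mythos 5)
The paper states this as a cited fact (Theorem~3.4 of Linial--Shraibman) and supplies no proof of its own, so there is no in-paper argument to compare against. Judged against the cited source, your reconstruction is correct and follows essentially the same route: the identification $\mc(A)=\gamma^{\infty}_{2}(A)$ with the $\gamma_{2}$-factorization norm, the minimax identity $m(A)=\min_{\mu}\gamma_{2}^{*}(\mu\circ A)$ (both feasible sets compact and convex, the objective bilinear), the elementary sandwich $\tfrac{1}{4}\|\mu\circ A\|_{\infty\to 1}\le\disc_{\mu}(A)\le\|\mu\circ A\|_{\infty\to 1}$, and Grothendieck's inequality $\gamma_{2}^{*}(M)\le K_{G}\|M\|_{\infty\to 1}$ are exactly the ingredients Linial and Shraibman combine. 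Your Gaussian-rounding variant for the lower-bound direction is also sound: using the positivity $A_{ij}\langle x_{i},y_{j}\rangle\ge m(A)>0$ available in an optimal representation, the identity $\mbb{E}[s_{i}t_{j}]=\tfrac{2}{\pi}\arcsin\langle x_{i},y_{j}\rangle$, and $\arcsin\tau\ge\tau$ on $[0,1]$, one indeed gets $\disc_{\mu}(A)\ge\tfrac{1}{2\pi}\,m(A)$ for every $\mu$, hence $\mc(A)\le\disc(A)^{-1}\le 2\pi\,\mc(A)$, a slightly sharper constant than~$8$. Either route certifies the stated factor of eight.
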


The relationship between the Boolean and sign matrix version of discrepancy is given by
$\disc'(B) = \disc(J - 2B)$, where $J$ is the all-ones matrix.

Hence, combining \bare\ref{Fact:Klauck}, \bare\ref{Fact:LS} and \ref{TaruiTranslated},
we can rule out the possibility that polynomial space is strictly contained in the communication complexity class
implicitly defined by low mc-rigidity, here, defined as $(\BP\cdot\mc')(\cdot)$, where $\mc'(B) \das \mc(J - 2B)$
maps Boolean matrices to sign matrices.

\begin{corollary}
Let $\mcal{M}^{\ccc} \das \{ L \mid (\BP\cdot\mc')(L_{n}) = \mrm{polylog}(n)\}$. Then
$\mcal{M}^{\ccc} = \opBP\PPcc$.

Hence, the statement $\PSPACEcc \subsetneq \mcal{M}^{\ccc}$ is not true.
\end{corollary}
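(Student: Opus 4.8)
The plan is to prove the class equality $\mcal{M}^{\ccc}=\opBP\PPcc$ first, and then read off the failure of $\PSPACEcc\subsetneq\mcal{M}^{\ccc}$ from the inclusion $\opBP\PPcc\subseteq\PSPACEcc$ already recorded above. So the real content is the equality, and for that I would just chase the measure comparisons that the preceding facts make available, applying the BP-operator to each side.

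Concretely: \ref{Fact:Klauck} sandwiches $\PP(B)$ between $\log(1/\disc'(B))$ and $\mcal{O}(\log(1/\disc'(B))+\log n)$; the dictionary $\disc'(B)=\disc(J-2B)$ together with \ref{Fact:LS} gives $\log(1/\disc'(B))=\log\mc(J-2B)\pm 3=\log\mc'(B)\pm 3$; composing, $\PP(B)$ and $\log\mc'(B)$ differ by at most a constant factor plus an additive $\mcal{O}(\log n)$ term, and this holds pointwise for every Boolean matrix $B$ of the dimension at hand (each such matrix being the $n$-bit section of some language). A pointwise comparison of this form, whose slack does not depend on the perturbed matrix, survives the monotone BP-operator: reading off $(\BP_\epsilon\cdot\Lambda)(f)=\max_\mu\min_{\tilde f\colon\mu(f\neq\tilde f)\leq\epsilon}\Lambda(\tilde f)$, one obtains that $(\BP_\epsilon\cdot\PP)(L_n)$ and $(\BP_\epsilon\cdot\log\mc')(L_n)$ again differ by at most a constant factor plus $\mcal{O}(\log n)$. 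Since $\log$ is monotone it commutes with $\max$ and $\min$, so $(\BP_\epsilon\cdot\log\mc')(L_n)=\log\big((\BP_\epsilon\cdot\mc')(L_n)\big)$. Feeding these two comparisons into the characterization $L\in\opBP\PPcc\iff(\BP_\epsilon\cdot\PP)(L_n)=\polylog(n)$ obtained from \ref{Claim:1} and \ref{Claim:2}, and into the defining condition of $\mcal{M}^{\ccc}$, yields $\mcal{M}^{\ccc}=\opBP\PPcc$.

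The second assertion is then immediate: $\mcal{M}^{\ccc}=\opBP\PPcc\subseteq\PSPACEcc$, so $\PSPACEcc$ cannot be \emph{strictly} contained in $\mcal{M}^{\ccc}$, for otherwise $\PSPACEcc\subsetneq\PSPACEcc$. (Together with $\PHcc\subseteq\opBP\PPcc$ from \ref{TaruiTranslated} this also sandwiches $\mcal{M}^{\ccc}$ between $\PHcc$ and $\PSPACEcc$, recovering the containment $\PHcc\subseteq\mcal{M}^{\ccc}$ of Linial and Shraibman.) I expect the only genuine work to be the bookkeeping in the middle step: checking that the multiplicative constants and the $\mcal{O}(\log n)$ slack really do stay within $\polylog(n)$ after the BP-operator is applied --- note \ref{Fact:Klauck} is asymmetric, with the $\log n$ correction appearing on one side only, so both directions must be tracked --- and making sure the error parameters line up; since the characterization of $\opBP\PPcc$ is valid for any fixed $\epsilon\in[0,1/2[$, one may simply take $\epsilon=1/3$ throughout, matching the $\BP$ in the definition of $\mcal{M}^{\ccc}$.
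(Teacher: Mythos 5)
Your approach is essentially the paper's own: the paper's ``proof'' of this corollary is the one-sentence remark preceding it (``combining \ref{Fact:Klauck}, \ref{Fact:LS}, and \ref{TaruiTranslated}, we can rule out\ldots''), and you have simply unwound that sketch correctly. The key observations you make --- that a pointwise comparison $\PP(B) \leq c\cdot\log\mc'(B) + \mcal{O}(\log n)$ (and vice versa) with slack depending only on the dimension and not on $B$ passes through the $\max_\mu\min_{\tilde f}$ structure of the BP-operator, and that the monotone $\log$ commutes with $\max$ and $\min$ so $(\BP_\epsilon\cdot\log\mc') = \log(\BP_\epsilon\cdot\mc')$ --- are exactly the right pieces of bookkeeping, and your handling of the $\epsilon$-parameter (fix $\epsilon = 1/3$ throughout) is correct. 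You also correctly note that \ref{TaruiTranslated} is not actually needed for the class equality itself; it only serves to re-derive the Linial--Shraibman containment $\PHcc \subseteq \mcal{M}^{\ccc}$ as a sanity check, and the second assertion about $\PSPACEcc$ follows immediately from $\opBP\PPcc \subseteq \PSPACEcc$.

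One point you should have flagged: the measure chain you establish gives $L \in \opBP\PPcc \iff \log\bigl((\BP_\epsilon\cdot\mc')(L_n)\bigr) = \polylog(n)$, i.e., $(\BP_\epsilon\cdot\mc')(L_n) = 2^{\polylog(n)}$. The corollary's displayed defining condition for $\mcal{M}^{\ccc}$ reads $(\BP\cdot\mc')(L_n) = \polylog(n)$ with no exponential or logarithm, which is not the same scale --- margin complexity of a $2^n\times 2^n$ matrix lives in $[1,2^{n/2}]$, so requiring it to be $\polylog(n)$ would define a far smaller class than $\opBP\PPcc$. This appears to be a typo in the paper (the intended reading is $(\BP\cdot\log\mc')(L_n) = \polylog(n)$, consistent with Linial--Shraibman's notion of ``low'' mc-rigidity being quasi-polynomial margin complexity after perturbation). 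Your argument is the correct one and silently repairs this, but when you say you ``feed the comparisons into the defining condition of $\mcal{M}^{\ccc}$'' you should note that the defining condition as literally written does not match; pointing out and fixing the scale mismatch is part of a complete verification.
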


\subsection{On Tarui's Theorem}
A remarkable result in structural complexity theory is Toda's Theorem, which tells us that
the polynomial hierarchy $\mcal{PH}$ is contained in $\opBP\oplus\mcal{P}$, see \cite{DBLP:journals/siamcomp/Toda91}.
Using the concept of randomized polynomials, \cite{DBLP:conf/stacs/Tarui91} extended this further by showing
 
\begin{fact}[\cite{DBLP:conf/stacs/Tarui91}]
$\mcal{PH} \subseteq \opBP\mcal{PP}$.
\end{fact}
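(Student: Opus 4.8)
The plan is to follow Tarui's original route through probabilistic polynomials, which strictly refines Toda's theorem; note that merely composing the two known inclusions $\mcal{PH}\subseteq\opBP\oplus\mcal{P}$ and $\oplus\mcal{P}\subseteq\mrm{P}^{\#\mcal{P}}$ does not yield the claim, since it is not known whether $\oplus\mcal{P}\subseteq\opBP\mcal{PP}$. First I would fix the combinatorial shape of a language in $\mcal{PH}$: for $L\in\Sigma^p_k$ there are a polynomial $q$ and a polynomial-time predicate $R$ such that $x\in L$ iff $\exists w_1\,\forall w_2\cdots\;R(x,w_1,\dots,w_k)$ holds, with each $w_i\in\B^{q(|x|)}$ and the quantifiers alternating $\exists,\forall,\dots$ . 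Reading the quantifier blocks as unbounded-fan-in $\mrm{OR}$ and $\mrm{AND}$ gates over the $2^{q(|x|)}$ choices of the corresponding $w_i$, the characteristic function $\chi_L$ is computed by a uniform depth-$\mcal{O}(1)$ $\mrm{AND}/\mrm{OR}$ formula of size $2^{n^{\mcal{O}(1)}}$ whose leaves are polynomial-time-decidable bits of $(x,w_1,\dots,w_k)$.

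Second, I would replace each alternation level by a probabilistic low-degree polynomial over $\mbb{Z}$. The building block is (essentially Toda's) probabilistic $\mrm{OR}$ of $m$ variables: using $\mcal{O}(\log m)$ random choices --- Valiant--Vazirani isolation of a unique $1$-coordinate --- one obtains a fixed polynomial of degree $\mcal{O}(\log m)$ with integer coefficients of magnitude $2^{m^{\mcal{O}(1)}}$ that, on every Boolean input, outputs the correct $\mrm{OR}$-value with probability at least $3/4$; a symmetric construction handles $\mrm{AND}$. Composing these gadgets through the $\mcal{O}(1)$ levels of the formula yields, for each $x$, a polynomial $P_{x,r}$ over $\mbb{Z}$ in the leaf-bits, of degree $n^{\mcal{O}(1)}$ and with coefficients bounded by $2^{n^{\mcal{O}(1)}}$, such that $P_{x,r}$ evaluated on the leaf-bits of $x$ equals $\chi_L(x)$ with probability at least $3/4$ over the random string $r$ recording all the hash choices.

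Third, I would turn ``evaluate $P_{x,r}$'' into a $\mcal{PP}$ computation. For fixed $x,r$, expand $P_{x,r}$ into monomials; each leaf-bit is itself an $\mrm{OR}/\mrm{AND}$ over exponentially many witness strings, and summing a signed monomial over all witness assignments is a difference of two $\#\mcal{P}$ counts, \ABie a $\mrm{GapP}$ function of $(x,r)$. Since the number of monomials and the coefficient sizes are $2^{n^{\mcal{O}(1)}}$, closure of $\mrm{GapP}$ under the relevant sums and products makes the integer value of $P_{x,r}$ (evaluated at the leaf-bits of $x$) itself a $\mrm{GapP}$ function of $(x,r)$; hence ``$P_{x,r}=1$'' --- equivalently, ``a suitable $\#\mcal{P}$ function of $(x,r)$ exceeds a $2^{n^{\mcal{O}(1)}}$-bounded $\mrm{FP}$ threshold'' --- is a $\mcal{PP}$ predicate $B(x,r)$ in the Beigel--Reingold--Spielman normal form (the communication analogue of which is \ref{Prop:PP:Equivalence}). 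By construction $B(x,r)=\chi_L(x)$ for at least a $3/4$-fraction of the $r$'s, so $L\in\opBP\mcal{PP}$; and, if needed, the error is pushed below $1/3$ by taking the majority of polynomially many independent trials, which stays in $\mcal{PP}$ by closure of $\mcal{PP}$ under polynomial-fan-in majority (\cite{DBLP:conf/stoc/BeigelRS91}; the communication analogue is \ref{Prop:Majority}).

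The main obstacle is the second step: arranging the probabilistic polynomial so that it simultaneously has polynomial degree and only $2^{n^{\mcal{O}(1)}}$-bounded integer coefficients, while its error over the internal randomness can be pushed below an arbitrary constant, and --- crucially --- so that all three properties survive composition through the $\mcal{O}(1)$ alternation levels, where careless composition squares or exponentiates coefficient magnitudes and degrees. Keeping this bookkeeping under control is the technical heart of Tarui's argument, and it is exactly what pins the target class down to $\mcal{PP}$ --- a single threshold of one $\#\mcal{P}$ count --- rather than to something weaker; the remaining steps are then routine given the closure properties of $\mrm{GapP}$ and $\mcal{PP}$ invoked above.
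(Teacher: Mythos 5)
The paper states this inclusion as a Fact imported from Tarui and gives no proof of its own, so the only in-paper point of comparison is the proof of \ref{TaruiTranslated}, which runs the same machinery in Yao's model. Your outline is the correct and intended route: quantifier blocks as exponential-fan-in $\mrm{AND}/\mrm{OR}$ gates over poly-time leaves, Tarui's randomized polynomial approximation of the resulting constant-depth circuit, and conversion of the polynomial's value into a single threshold test on a counting function --- your ``$\#\mcal{P}$ count exceeds an $\mrm{FP}$ threshold'' normal form is exactly the paper's $\Psi_n = \Phi_n + g_n$ trick followed by \ref{Prop:PP:Equivalence}. Your opening observation that Toda's theorem plus $\oplus\mcal{P}\subseteq\mrm{P}^{\#\mcal{P}}$ does not suffice is also correct and worth making. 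Two small caveats: the coefficient bound ``$2^{m^{\mcal{O}(1)}}$'' for the single-gate gadget is a slip --- with fan-in $m=2^{n^{\mcal{O}(1)}}$ that would be doubly exponential; the correct bound is quasi-polynomial in $m$, \ABie $2^{(\log m)^{\mcal{O}(1)}}=2^{n^{\mcal{O}(1)}}$, consistent with what you claim for the composed polynomial. And the degree/coefficient/error bookkeeping through composition, which you rightly identify as the technical heart, is left entirely to Tarui's Theorem~3.1; since the paper itself invokes that theorem as a black box, this is an acceptable level of deferral here, but it is the one place where your proposal is a citation rather than a proof.
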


In fact, he even showed a stronger statement.

Often there are several pitfalls when one tries to transfer a result from structural complexity theory to communication complexity.
In case of Toda's Theorems, see \cite{DBLP:journals/eccc/Wunderlich10}, the use of complexity class operators was essential to avoid
problems with relativization. Establishing a communication complexity version of Tarui's result is a bit tricky, too.
Indeed, it is not at all clear how to transfer the proof of Theorem 4.1 in \cite{DBLP:conf/stacs/Tarui91} to Yao's model.
Instead, we express communication protocols as generalized $\mcal{AC}^{0}$ circuits and then apply Tarui's randomized
polynomial approximations for such circuits. As far as we know, the observation that languages in $\PHcc$ can be expressed by
$\mcal{AC}^{0}$ circuits is from \cite{RazborovRigidMatrices}. This was used by \cite{DBLP:conf/focs/Lokam95} together with
Tarui's result to prove upper bounds for weak rigidities for languages in $\PHcc$.
(Hence, lower bounds for weak rigidities would give us languages outside $\PHcc$.)
In the same vein, \cite{DBLP:journals/cpc/LinialS09} utilized this insight in the proof of their result that languages with
high mc-rigidity lie outside of the polynomial hierarchy.
Coming back again to randomized polynomial approximations, these objects are not $\sPcc$ functions yet. Hence, we have to apply
the same trick to handle the negative terms as in \cite[Theorem 3.2]{DBLP:conf/stacs/Tarui91}.

\begin{theorem}\label{TaruiTranslated}
$\PHcc \subseteq \opBP\PPcc$.
\end{theorem}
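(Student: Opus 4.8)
The plan is to follow the route sketched in the paragraph preceding the theorem: translate languages in $\PHcc$ into constant-depth circuits, apply Tarui's randomized polynomial approximation for such circuits, and then post-process the resulting polynomial into a genuine guess protocol. First I would fix a language $L\in\PHcc$. By the standard observation (attributed to Razborov) a $\PHcc$-protocol of cost $\polylog(n)$ can be unfolded into an $\mcal{AC}^0$-type circuit: the constantly many alternating blocks of quantifiers over $\polylog(n)$-bit guess strings become constant-depth $\bigvee/\bigwedge$ gates of quasipolynomial fan-in, and at the leaves sit the deterministic protocols of cost $\polylog(n)$, each of which accepts (or rejects) a combinatorial rectangle and so is itself computable by a small depth-$2$ formula in the input bits $x,y$. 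Thus $L_n$ is computed by a circuit of constant depth and size $2^{\polylog(n)}$ whose inputs are the bits of $(x,y)$.

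Next I would invoke Tarui's construction of randomized polynomial approximations for $\mcal{AC}^0$ circuits (Theorem 3.2 in \cite{DBLP:conf/stacs/Tarui91}): for such a circuit there is a distribution over integer polynomials $P_r$ in the input bits, of quasipolynomially bounded degree and coefficients, such that for every fixed input $(x,y)$, with probability at least $2/3$ over the random seed $r$ one has $P_r(x,y)=$ the circuit's output (in $\{0,1\}$), and $P_r(x,y)\in\{0,1\}$ always. The seed $r$ ranges over $2^{\polylog(n)}$ bits. I would then have to express, for each fixed $r$, the quantity $P_r(x,y)$ as the $\acc$-count of an efficient guess protocol, i.e. realize $P_r$ as a member of $\sPcc$. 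This is where the "same trick to handle negative terms" enters: a polynomial with possibly negative integer coefficients is not directly an $\acc$-count, but using the $\gap$-calculus developed earlier in the paper (the sum, product and scaling operations on guess protocols, together with \ref{PP:Lemma1}) one builds a guess protocol $\Pi_r$ with $\gap_{\Pi_r}(x,y)=c\cdot P_r(x,y)-(c-1)$ for a suitable positive integer shift, or more simply one works with the two-part representation $P_r=P_r^+-P_r^-$ and pads, so that the PP acceptance mode of $\Pi_r$ outputs exactly the bit $P_r(x,y)$; since the degree, coefficient size and number of variables are all quasipolynomial, \ref{PP:Lemma1} keeps the guess count at $2^{\polylog(n)}$ and the communication cost at $\polylog(n)$, because each monomial only needs the leaf protocols (cost $\polylog(n)$) multiplied together a quasipolynomial-degree number of times, contributing a $\polylog(n)\cdot 2^{\polylog(n)}$ factor that is still fine after taking the logarithm.

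Finally I would assemble the BP-operator wrapper. Define $L'$ by $(\langle x,r\rangle,\langle y,r\rangle)\in L'_{m}$ iff the guess protocol $\Pi_r$ accepts $(x,y)$ in PP mode; by the previous paragraph $L'\in\PPcc$, and by Tarui's approximation guarantee, for each $(x,y)$ the fraction of seeds $r$ with $\Pi_r$ giving the correct value of $L_n(x,y)$ is at least $2/3$, which is exactly the definition of $L\in\opBP\PPcc$ with the polynomially bounded seed-length function $q$. (If one only gets correctness probability bounded away from $1/2$ rather than $\ge 2/3$, \ref{Thm:PA} via \ref{Prop:Majority} amplifies it; but Tarui's construction already gives $2/3$, or any constant, directly.) The main obstacle I expect is the second step: not the existence of Tarui's approximations, which is quoted, but verifying cleanly that the quasipolynomial blow-ups in degree and coefficient size incurred when turning an $\mcal{AC}^0$ circuit into a randomized polynomial, and then that polynomial into a guess protocol via repeated application of the product operation and \ref{PP:Lemma1}, all stay within the $2^{\polylog(n)}$ guess budget and $\polylog(n)$ cost budget simultaneously — in particular one must be careful that the leaf protocols encoding the deterministic rectangles really do have cost $\polylog(n)$ as functions of the original inputs, and that composing depth-many layers does not push the cost out of $\polylog$.
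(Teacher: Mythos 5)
Your proposal follows the same three-stage route as the paper's proof: unfold the $\PHcc$ computation into a quasipolynomial-size constant-depth $\{\vee,\wedge\}$ circuit whose inputs are rectangle variables $z_i = f_i(x)\wedge g_i(y)$, invoke Tarui's randomized integer-polynomial approximation, and convert the resulting polynomial into a PP-protocol to be wrapped by the BP-operator. The step you flag as the main obstacle is precisely where the paper is cleaner than the $\gap$-calculus route you contemplate, and it is worth seeing why it is actually easier. When $\Phi_{t_n}$ is expanded in the rectangle variables $z_i$, each monomial $\prod_{s\in S} z_s$ is again a rectangle indicator $f_S(x)g_S(y)$, and each such indicator is the accepting value of a cost-$\mcal{O}(1)$ deterministic protocol. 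Thus, after adding the constant $g_n$ (the sum of absolute values of the negative coefficients) to form $\Psi_n = \Phi_n + g_n$, one obtains \emph{directly} a randomized $\sPcc$ function (repeat each rectangle indicator $|\alpha_S|$ times, using $1-f_S(x)g_S(y)$ for negative $\alpha_S$), and Proposition~\ref{Prop:PP:Equivalence} converts the threshold test $\Psi_n > g_n$ into a PP-protocol. This bypasses Lemma~\ref{PP:Lemma1} entirely; using that lemma as you suggest would require feeding it gaps of guess protocols, which live in $\{\pm 1,\pm 3,\ldots\}$ rather than $\{0,1\}$, forcing an affine change of variables and needless bookkeeping. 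Two smaller points: the random seed has length $\polylog(n)$, not $2^{\polylog(n)}$ (you conflate the seed length with the size of the seed space), and the leaves are the rectangle variables themselves rather than ``depth-2 formulas in the bits of $(x,y)$''; both are needed for the BP-operator wrapper and the $\sPcc$ conversion to stay within budget.
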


\begin{proof}
Let $L$ be a language in $\PHcc$. It was observed in \cite{RazborovRigidMatrices} and also \cite[Proof of Theorem 4.1]{DBLP:conf/focs/Lokam95}
that there exist an $\mcal{AC}^{0}$ circuit family $(C_{t_{n}})_{n \geq 1}$ of $\{ \vee , \wedge \}$ circuits $C_{t_{n}}$ of size $\leq 2^{\polylog(n)}$
with $t_{n} \leq 2^{\polylog(n)}$ many variables, and families $f^{(n)}_{1},\ldots,f^{(n)}_{t_{n}}$,
$g^{(n)}_{1},\ldots,g^{(n)}_{t_{n}}$ of Boolean functions such that for all $n$-bit input pairs $(x,y)$ we have
\begin{equation*}
(x,y) \in L_{n} \iff C_{t_{n}}\left(f^{(n)}_{1}(x) \wedge g^{(n)}_{1}(y),\ldots,f^{(n)}_{t_{n}}(x) \wedge g^{(n)}_{t_{n}}(y)\right) = 1
\enspace.
\end{equation*}
In \cite[Theorem 3.1]{DBLP:conf/stacs/Tarui91} it was shown that such a circuit family can be approximated with bounded error ($\epsilon = 1/3$)
by a family of randomized polynomials $\Phi_{t_{n}}$ over $\mbb{Z}$ such that the degree $d_{t_{n}}$ of $\Phi_{t_{n}}$ is
$d_{t_{n}} \leq \polylog(t_{n}) = \polylog(n)$, the absolute value of each coefficient is bounded by $2^{\polylog(t_{n})} = 2^{\polylog(n)}$,
$\Phi_{t_{n}}$ uses $\polylog(t_{n}) = \polylog(n)$ many random bits, and that $\Phi_{t_{n}}$ computes $C_{t_{n}}$ with two-sided error and Boolean guarantee.
We can write $\Phi_{n}(x,y) \das \Phi_{t_{n}}\left(f^{(n)}_{1}(x) \wedge g^{(n)}_{1}(y),\ldots,f^{(n)}_{t_{n}}(x) \wedge g^{(n)}_{t_{n}}(y)\right)$ as
\begin{align*}
\Phi_{n}(x,y) &= \sum_{S \subseteq [t_{n}], |S| \leq d_{t_{n}}}\alpha^{(n)}_{S} \cdot \prod_{s \in S}f^{(n)}_{s}(x)g^{(n)}_{s}(y)\\
&= \sum_{S \subseteq [t_{n}], |S| \leq d_{t_{n}}}\alpha^{(n)}_{S} \cdot f^{(n)}_{S}(x)g^{(n)}_{S}(y)
\enspace,
\end{align*}
where $f^{(n)}_{S}(x) \das \prod_{s \in S}f^{(n)}_{s}(x)$, and similarly for $g^{(n)}_{S}(y)$.

Define $g_{n}$ as the sum of the absolute values of the negative coefficients in $\Phi_{n}(x,y)$, and define
$\Psi_{n}(x,y) \das \Phi_{n}(x,y) + g_{n}$. Then
\begin{align*}
\Psi_{n}(x,y) &=
\sum_{\stackrel{S \subseteq [t_{n}], |S| \leq d_{t_{n}},}{\alpha^{(n)}_{S} > 0}}\ \sum_{1 \leq \alpha \leq \alpha^{(n)}_{S}}
f^{(n)}_{S}(x)g^{(n)}_{S}(y)\\
\\
&+
\sum_{\stackrel{S \subseteq [t_{n}], |S| \leq d_{t_{n}},}{\alpha^{(n)}_{S} < 0}}\ \sum_{1 \leq \alpha \leq \left|\alpha^{(n)}_{S}\right|}
\left(1 - f^{(n)}_{S}(x)g^{(n)}_{S}(y)\right)
\end{align*}
and $(\Psi_{n})_{n \geq 1}$ is clearly a family of randomized $\sPcc$ functions, because the number of terms is bounded by
\begin{equation*}
\max_{S}\left|\alpha^{(n)}_{S}\right| \cdot \left|t_{n} \choose d_{t_{n}}\right|
\leq 2^{\polylog(n)} \cdot 2^{\polylog(n) \cdot \polylog(n)}
= 2^{\polylog(n)}
\enspace.
\end{equation*}
In addition, for every $n$-bit input pair $(x,y)$ with high probability we have
\begin{equation*}
(x,y) \in L_{n} \iff \Psi_{n}(x,y) > g_{n}
\enspace.
\end{equation*}
Applying \ref{Prop:PP:Equivalence} on $(\Psi_{n})_{n \geq 1}$ and $(g_{n})_{n \geq 1}$
yields a randomized family of guess protocols computing $L$ in PP acceptance mode with bounded error. Hence, $L \in \opBP\PPcc$.
\end{proof}

\begin{corollary}
Any lower-bound method $\mu$ for $\PPcc$ leads to a lower-bound method $\mrm{BP}\cdot\mu$ for $\PHcc$ via perturbation with the BP-operator.  
\end{corollary}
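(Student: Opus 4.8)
The plan is to deduce this from Theorem~\ref{TaruiTranslated} together with the monotonicity of the BP-operator; beyond fixing definitions, essentially nothing is needed. First I would make precise what a \emph{lower-bound method} $\mu$ for $\PPcc$ is: a map $\mu$ from Boolean matrices to $\R$ with $\mu(f) \leq \PP(f)$ for every Boolean function $f$ (more generally, up to the harmless affine slack $\mu \leq c\cdot\PP + d$ with a positive constant $c$, as occurs for instance in Fact~\ref{Fact:Klauck}). This is exactly the feature that makes $\mu$ useful: a superpolylogarithmic value of $\mu(L_{n})$ certifies $L \notin \PPcc$.

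Second, I would record that the BP-operator is monotone and affine-compatible: since by definition $(\BP_{\epsilon}\cdot\Lambda)(f)$ is a maximum over probability distributions of a minimum, over $\epsilon$-perturbations $\tilde f$ of $f$, of the values $\Lambda(\tilde f)$, we have that $\Lambda_{1} \leq \Lambda_{2}$ pointwise implies $\BP_{\epsilon}\cdot\Lambda_{1} \leq \BP_{\epsilon}\cdot\Lambda_{2}$ pointwise, and $\BP_{\epsilon}\cdot(c\Lambda + d) = c\,(\BP_{\epsilon}\cdot\Lambda) + d$ for $c > 0$ and any $\epsilon \in [0,1/2[$; both facts follow because enlarging (resp.\ positively rescaling) $\Lambda$ does the same to each inner minimum and hence to the outer maximum.

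Third, I would chain the ingredients. Let $L \in \PHcc$. By Theorem~\ref{TaruiTranslated} we get $L \in \opBP\PPcc$, hence, by the characterization $L \in \opBP\PPcc \iff (\BP\cdot\PP)(L_{n}) = \polylog(n)$ established earlier in the section (from the two claims via Newman's theorem and Yao's minimax principle), $(\BP\cdot\PP)(L_{n}) = \polylog(n)$. Monotonicity applied to $\mu \leq \PP$ (and affine-compatibility to absorb the slack) then gives $(\BP\cdot\mu)(L_{n}) = \polylog(n)$ as well. Taking the contrapositive: whenever $(\BP\cdot\mu)(L_{n})$ is superpolylogarithmic in $n$, the language $L$ lies outside $\PHcc$ --- which is precisely the statement that $\BP\cdot\mu$ is a lower-bound method for $\PHcc$. (The special case $\mu = \mc'$, so that $\BP\cdot\mu = \BP\cdot\mc'$ defines $\mcal{M}^{\ccc}$, recovers the Linial--Shraibman result via Facts~\ref{Fact:Klauck} and~\ref{Fact:LS}.)

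The only step that calls for genuine care --- and hence the main obstacle --- is the first one: settling on a notion of ``lower-bound method'' robust enough for the monotonicity/affine-compatibility argument to apply uniformly, and in particular verifying that the constant-factor and additive-$\mcal{O}(\log n)$ slack appearing in characterizations such as Fact~\ref{Fact:Klauck} is swallowed by the $\polylog(n)$ bound. Everything after that is purely formal.
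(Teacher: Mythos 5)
Your argument is correct and is essentially the one the paper has in mind (the paper states this corollary without proof, as an immediate consequence of Theorem~\ref{TaruiTranslated} together with the characterization $L \in \opBP\PPcc \iff (\BP_{\epsilon}\cdot\PP)(L_{n}) = \polylog(n)$). The chain you lay out --- $L \in \PHcc \Rightarrow L \in \opBP\PPcc \Rightarrow (\BP\cdot\PP)(L_{n}) = \polylog(n) \Rightarrow (\BP\cdot\mu)(L_{n}) = \polylog(n)$ by monotonicity of the max--min operator, then contraposition --- is exactly the intended reasoning, and your observation that the affine slack in Fact~\ref{Fact:Klauck} is absorbed into $\polylog(n)$ is the right sanity check.
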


In particular, this holds for instances such as the discrepancy method, the $\gamma^{\infty}_{2}$-norm and margin complexity,
the latter subsuming a result of Linial and Shraibman.

Let $\overline{L} \das \{ (x,y) \in \B^{\ast\ast} \mid (x,y) \notin L\}$ define the complement of a language $L$,
and let $\mrm{co}\cdot\mcal{C}$ denote the class of all complements of languages from $\mcal{C}$.

Define the \emph{RP-operator}, $\mcal{RP}\cdot\mcal{C}$, analogously to the BP-operator but with one-sided error.
Finally, define the \emph{Las-Vegas-operator}, $\mcal{ZP}\cdot$, as $\mcal{ZP}\cdot\mcal{C} \das \mcal{RP}\cdot\mcal{C} \cap \mrm{co}\cdot\mcal{RP}\cdot\mcal{C}$.

\begin{openquestion}
Do we have
$\PHcc \subseteq \mcal{ZP}\cdot\PPcc\ ?$
\end{openquestion}

\bibliography{note}

\appendix

\section{Proof of \ref{Prop:DegM}}
Denote by $\lc P(z)$ the absolute value of the largest coefficient of a univariate polynomial $P(z)$. Clearly, we have
\begin{equation*}
\lc P(z)Q(z) \leq (\deg P(z) + \deg Q(z))(\lc P(z))(\lc Q(z))
\enspace.
\end{equation*}
By induction, we obtain
\begin{equation*}
\lc P^{n}(z) \leq n!\left( \deg P(z) \cdot \lc P(z) \right)^{n}
\enspace.
\end{equation*}
\begin{enumerate}
\item We have $\deg P^{h(k)}_{m}(z) = h(k)\deg P_{m}(z) = h(k)(1 + m\cdot 2)$. Furthermore,
$\lc P_{m}(z) = 1 \cdot \lc \Pi_{i \in [m]}(z - 2^{i})^{2} \leq \lc (z - 2^{m})^{2m} \leq (2m)!(2m)2^{m}$.
Hence,
\begin{equation*}
\lc P^{h(k)}_{m}(z) \leq h(k)!\left(h(k)(2m+1) \cdot (2m)!(2m)2^{m}\right)^{h(k)} \leq
2^{2h(k)\log h(k) + 3h(k)m\log(2m+1)}
\enspace.
\end{equation*}

\item We have $\deg S^{(k)}_{m}(z) = \deg P^{h(k)}_{m}(z) = h(k)(2m+1)$ and
\begin{equation*}
\lc S^{(k)}_{m}(z) \leq 2\lc P^{h(k)}_{m}(z) \leq 2^{1 + 2h(k)\log h(k) + 3h(k)m\log(2m+1)}
\enspace.
\end{equation*}

\item Finally, we have $\deg T^{k}_{m}(z) = \deg S^{(2k)}_{m}(z) = h(2k)(2m+1)$ and
\begin{equation*}
\lc T^{(k)}_{m}(z) \leq 2\lc S^{(2k)}_{m}(z) \leq 2^{3h(k)(\log h(k) + m\log(2m+1) + 1)}
\enspace.
\end{equation*}
\end{enumerate}\qed

\end{document}